\newtheorem{thm}{Theorem}
\newtheorem{lem}[thm]{Lemma}
\newtheorem{rem}{Remark}
\newtheorem{defin}{Definition}
\begin{document}
	
\title{On Secure Capacity of Multiple Unicast Traffic over Separable Networks}
\author{
	\IEEEauthorblockN{Gaurav Kumar Agarwal$^\star$, Martina Cardone$^\dagger$, Christina Fragouli$^\star$ }
	$^\star$ UCLA, CA 90095, USA,
	Email: \{gauravagarwal, christina.fragouli\}@ucla.edu \\
	$^\dagger$ University of Minnesota, Minneapolis, MN 55404, USA, Email: cardo089@umn.edu
}
\maketitle

\begin{abstract}
	This paper studies the problem of information theoretic secure communication when a source has private messages to transmit to $m$ destinations, in the presence of a passive adversary who eavesdrops an unknown set of $k$ edges. The information theoretic secure capacity is derived over unit-edge capacity separable networks, for the cases when $k=1$ and $m$ is arbitrary, or  $m=3$ and $k$ is arbitrary. This is achieved by first showing that there exists a secure polynomial-time code construction that matches an outer bound over two-layer networks, followed by a deterministic mapping between two-layer and arbitrary separable networks. 
\end{abstract}

\section{Introduction}
Today, a large portion of exchanged data over communication networks is {inherently} {\it sensitive and private} (e.g., banking, professional, health). 
Moreover, given the {recent} progress in quantum computing, we can no longer exclusively rely on computational security: we need to explore unconditionally (information theoretic)  secure schemes. In this paper, we present new results for information {theoretic} security {over networks with multiple unicast sessions.} 

We assume that a source has $m$ private messages to send to $m$ destinations over a network modeled as a directed graph with unit capacity edges.
{This communication occurs}
in the presence of  a passive {external} adversary {who} has unbounded computational capabilities (e.g., quantum computer), but limited network presence, {i.e.,} she can wiretap  (an unknown set of) at most $k$ edges of her choice. We {seek to characterize
	the information theoretic secure capacity} for this setup.

Our results apply to the class of {{\em separable} networks that, broadly speaking, are networks that can be partitioned into a number of edge disjoint subnetworks that satisfy certain properties (see Definition~\ref{def:sepNet} in Section~\ref{sec:sys_model})}.
We establish a direct mapping between {the} secure capacity for separable networks, and {the} secure capacity for {two-layer} networks constructed as {follows. The} source is connected to a set of relays via direct edges. These relays are then connected to the $m$ destinations, such that each destination is directly connected to an (arbitrary)  subset of {the} relays. An example of such a {two-layer} network with $6$ relays and $3$ destinations is shown in Fig.~\ref{fig:combi}. 
\begin{figure}[!ht]
	\centering
	\includegraphics[width=0.55\columnwidth]{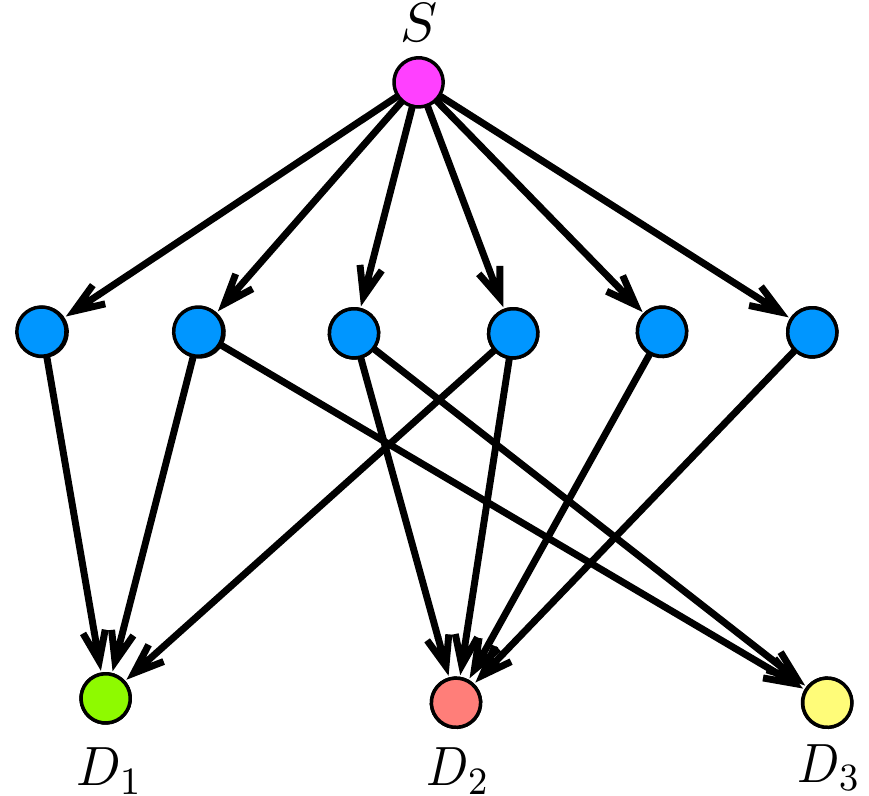}
	\caption{Example of a two-layer network. For $k=1$, the joint scheme achieves the rate triple $(2,2,1)$. This rate triple cannot be achieved by spatially separating the transmissions of the keys and the encoded messages. 
	}
	\label{fig:combi}
	\vspace{-4mm}
\end{figure}

In~\cite{gauravTIT}, we characterized the secure capacity region for separable networks having $m=2$ destinations. We showed that for $m=2$ it is optimal to use different parts of the network to transmit the keys and the encoded messages.
However, as we also pointed out in~\cite{gauravTIT}, such a scheme is not optimal when $m>2$. We proved this by constructing a joint scheme for two-layer networks that mixes the transmission of keys and encoded messages over the network, and showing that it can achieve higher secure rates than spatially separating the transmissions of the keys and the encoded messages. For instance in Fig.~\ref{fig:combi}, the joint scheme achieves the rate {triple} $(2,2,1)$, which is not possible otherwise.

In this paper, we proved that we can leverage the polynomial-time joint scheme in~\cite{gauravTIT} for two-layer networks, to prove capacity results for separable networks  for the following additional cases: 
(i) networks where $m=3$ and $k$ is arbitrary;
(ii) networks where $k=1$ and $m$ is arbitrary;
(iii) networks where $k$ and $m$ are arbitrary, but the network has some special structure in terms of minimum cut.
To prove optimality in these new cases, we needed new proof techniques, that include calculating the dimension of the sum of $m=3$ subspaces in a form that matches a modified outer bound. We also  prove that the secure capacity region of any separable network can be characterized from the secure capacity region of the corresponding two-layer network, referred {to} as the child two-layer network. In particular, we 
provide a deterministic mapping from a secure scheme for the child two-layer network to a secure scheme for the corresponding separable network.
We note that for $m=2$ every network is separable {\cite{GauravICITS2017}}; however this is no longer the case for $m\geq 3$ {\cite{gauravTIT}.}


\noindent{\bf{Related Work.}} 
Shannon~\cite{shannon1949communication} proved that the one-time pad can provide perfect information theoretic security {with} pre-shared keys. 
For degraded point-to-point channels, Wyner~\cite{Wyner} showed {that} information theoretic security {can be achieved} without pre-shared keys. 
With feedback, {Maurer~\cite{Maurer}} {proved} that secure communication is possible, even when the adversary has a channel {of better quality} than the legitimate receiver. Multicast traffic over networks of unit capacity edges was {analyzed} by Cai et al. in~\cite{cai2002secure}, {and followed by several other works, such as~\cite{feldman2004capacity, el2007wiretap}}. {In~\cite{cai2002secure},} the information theoretic secure capacity was characterized {for networks where} a source {multicasts} the {same} information to a number of destinations in the presence of a passive {external} adversary eavesdropping any {$k$} edges of her choice.  
In~\cite{layeredNW}, {the} authors studied adaptive and active attacks and also considered multiple multicast traffic over a layered network structure, {with arbitrary number of layers. However,}
different to this paper, every node {in} one layer is connected to every node {in} the next layer. It therefore follows that, for the case of two layers, our setting encompasses {the one in~\cite{layeredNW}.}

\noindent{\bf{Paper Organization.}} 
In Section~\ref{sec:sys_model} we define {two-layer {and separable} networks, and {formulate the problem.}
	In Section~\ref{sec:secScheme}, we review} the secure scheme {proposed in~\cite{gauravTIT}}
and {in Section~\ref{sec:rate_region}}, we characterize {its achieved rate region.}
In Section~\ref{sec:rate_region} we also show the {mapping between separable and two-layer networks.} 
{In Section~\ref{sec:CapK1} and Section~\ref{sec:3Dest}, we prove} that the scheme achieves the {secure} capacity when $k=1$ and $m=3$, respectively. 
In Section~\ref{sec:3Dest}, we also {provide} sufficient conditions {for the scheme to be optimal for arbitrary $k$ and~$m$.}



\section{System Model and Problem Formulation}
\label{sec:sys_model}

\label{sec:setting}
\noindent \textbf{Notation:} Calligraphic letters indicate sets;
$\emptyset$ is the empty set;
$\mathcal{A}_1 \sqcup \mathcal{A}_2$ indicates the disjoint union of $\mathcal{A}_1$ and $\mathcal{A}_2$;
$\mathcal{A}_1 \backslash \mathcal{A}_2$ is $\mathcal{A}_1 \cap \mathcal{A}_2^C$;
$[n]: = \{1,2,\ldots, n\}$; 
$[x]^+ := \max\{0, x\}$ for $x \in \mathbb{R}$.

{A two-layer network consists of one source $S$ that wishes to communicate with $m$ destinations, by hopping information through one layer of $t$ relays. As such, a two-layer network
is parameterized by: (i) the integer $t$, which denotes the number of relays in the first layer; (ii) the integer $m$, which indicates the number of destinations in the second layer; (iii) $m$ sets $\mathcal{M}_i, \ i \in [m] $, such that $\mathcal{M}_i \subseteq [t]$, where $\mathcal{M}_i$ contains the indexes of the relays connected to destination $D_i$.}
An example of {a two-layer network is shown in Fig.~\ref{fig:combi}, for which} $t=6$, $m=3$, $\mathcal{M}_1 = \{1,2,4\}$, $\mathcal{M}_2 = \{3,4,5,6\}$ and $\mathcal{M}_3 = \{2,3\}$. 

We represent a two-layer wireline network with a directed acyclic graph $\mathcal{G}=(\mathcal{V},\mathcal{E})$, where $\mathcal{V}$ is the set of nodes and $\mathcal{E}$ is the set of edges. The edges represent orthogonal and interference-free communication links, {which are}
discrete noiseless memoryless channels of unit capacity over a common alphabet.
If an edge $e \in \mathcal{E}$ connects a node $i$ to a node $j$, we denote, $\text{tail}(e) = i$ and $\text{head}(e) = j$. 
{$\mathcal{I}(v)$ and $\mathcal{O}(v)$ are the set of all incoming and outgoing edges of node $v$, respectively.}

Source $S$ has a message $W_i$ for destination $D_i, i \in [m]$. These $m$ messages are assumed to be independent.
Thus, the network consists of multiple unicast traffic, where $m$ unicast sessions take place simultaneously and share the network {resources.
A passive} {external eavesdropper Eve is also present} and can wiretap any {$k$} edges of her {choice.} 
%
The symbol transmitted over $n$ channel uses on {$e \in \mathcal{E}$} is denoted as $X_e^n$. In addition, for $\mathcal{E}_t \subseteq \mathcal{E}$ we define $X_{\mathcal{E}_t}^n= \{ X_e^n: e \in \mathcal{E}_t\}$. We assume that {$S$} {has} infinite sources of randomness $\Theta$, while the other nodes in the network do not have any randomness. 

Over {this} network, we {seek to reliably communicate (with zero error) the message $W_i, i \in [m]$ to destination $D_i$ so that}
Eve {receives no information} about the {content of the} messages.  
In particular, we are interested in {ensuring} perfect information theoretic secure communication, {and we aim at characterizing} the secure capacity region, {which is next formally defined.}
{
\begin{defin}[Secure Capacity Region]
	\label{defin:sec_cap}
	A rate $m$-tuple $(R_1, R_2, \ldots, R_m)$ is said to be securely achievable if there exist a block length $n$ with $R_i = \frac{1}{n} H(W_i), \ \forall i \in [m]$ and encoding functions  $f_e,   \forall e \in \mathcal{E}$, over a finite field $\mathbb{F}_q$ with
	\begin{align*}
	X^n_e = 
	\left \{
	\begin{array}{ll}
	f_e \left( W_{{[m]}}, \Theta \right) & \mbox{if} \ \mbox{tail}(e)=  S,
	\\
	f_e \left( \{X^n_\ell : \ell \in \mathcal{I}(tail(e))\}  \right) & \mbox{otherwise},
	\end{array} \enspace 
	\right.
	\end{align*}
	{such that each} destination $D_i$ can reliably decode the message $W_i$ i.e., 
	$H\left( W_i | \{X^n_e: e \in \mathcal{I}(D_i)\}\right) = 0, \ \forall i \in [m]$. 

{We also} require perfect secrecy, i.e., $
		I \left(W_{{[m]}} ; X^n_{\mathcal{E}_{\mathcal{Z}}}\right)  = 0, \ \forall \ \mathcal{E}_{\mathcal{Z}} \subseteq \mathcal{E} \ \text{such that} \ |\mathcal{E}_{\mathcal{Z}}| \leq k.$
		The \textbf{secure capacity region} is
		the closure of all such feasible rate $m$-tuples.
\end{defin}
{In order to prove that our designed scheme meets the perfect secrecy requirement in Definition~\ref{defin:sec_cap},}
we will use the {``matrix rank''} condition on perfect secrecy proved in~\cite[Lemma 3.1]{CaiSecCond}.

{We now provide a couple of definitions that will be used in the remaining part of the paper, and we state two remarks that highlight some properties of the networks of interest.}

\begin{defin} [Min-Cut]
	\label{def:MinCut}
	We denote by $M_\mathcal{A}$ the capacity of the min-cut between the source $S$ and the set of destinations $D_\mathcal{A}:=\{D_i,\ i \in \mathcal{A} \}$, and {refer to it as} the
		min-cut capacity. 
\end{defin}

\begin{defin}[{Separable Graph}]
	\label{def:sepNet}
	A graph $\mathcal{G} = (\mathcal{V}, \mathcal{E})$ with a  source and $m$ destinations is said to be \textbf{separable} if it can be partitioned into $2^m-1$ edge disjoint graphs (graphs with empty edge sets are also allowed). In particular, these graphs are denoted as $\mathcal{G}_{\mathcal{J}}^\prime = (\mathcal{V},{\mathcal{E}_{\mathcal{J}}^\prime}),  {{\mathcal{J}} \subseteq [m], \mathcal{J} \neq {\emptyset}}$ and are such that ${\mathcal{E}_{\mathcal{J}}^\prime} \subseteq \mathcal{E}$ and 
		$\mathcal{E}_\mathcal{J}^\prime \cap \mathcal{E}_\mathcal{L}^\prime = \emptyset, \ \forall \mathcal{J} \neq \mathcal{L} \subseteq[m]$. {Moreover, their min-cut capacities satisfy} the following condition
		\begin{align}
		M_{\mathcal{A}} & = \sum_{\substack{\mathcal{J}\subseteq {[m]} \\ \mathcal{J} \cap \mathcal{A} \neq \emptyset} } M_{\mathcal{J}}^\prime, \ \forall \mathcal{A} \subseteq {[m]} \label{eq:star_mincuts},
		\end{align}
		where, for {$\mathcal{G}$,} $M_{\mathcal{A}}$ is defined in Definition~\ref{def:MinCut}, and the graph $\mathcal{G}_\mathcal{J}^\prime$ has the following min-cut capacities: (i) $M_\mathcal{J}^{\prime}$ from the source $S$ to any non-empty subset of destinations in $\mathcal{J}$, and (ii) zero from the source $S$ to the set of destinations $\{D_i: i \in [m]\setminus \mathcal{J}\}$.
\end{defin}

\begin{rem}
For two-layer networks, we have $M_\mathcal{A} = \left|\cup_{i\in \mathcal{A}} \mathcal{M}_i\right|$. 
For notational convenience, {we let $M_{\cap \{i,j\}} = |\mathcal{M}_i \cap \mathcal{M}_j|$ and $M_{\cap \{i,\mathcal{A} \}} = |\mathcal{M}_i \cap \left(\cup_{j \in \mathcal{A}} \mathcal{M}_j \right)|$. Moreover, we also assume that $M_{\{i\}} \geq K, \forall i \in [m]$ (otherwise secure communication is not possible) with $M_{\emptyset} := K$ for consistency.}
\end{rem}

\begin{rem}
The single unicast secure capacity~\cite{cai2002secure} is $R_i = M_{\{i\}} - k,  \forall i \in [m]$. 
However, {when} multiple unicast {sessions share} common network resources, { in general, it is not possible to communicate at rate $R_i = M_{\{i\}} - k, \  i \in [m]$ to all destinations simultaneously {(for any choice of $k$)}. This is also highlighted in the outer bound that we derived in~\cite{GauravICITS2017}, which is contained inside the region given by $R_i = M_{\{i\}} - k,  \forall i \in [m]$.}
\end{rem}
}

\section{{Secure Transmissions Scheme}}
\label{sec:secScheme}
We here review the secure transmission scheme for two-layer networks that we recently {proposed in~\cite{gauravTIT}. 
The source} $S$ encodes the message packets with $k$ random packets and transmits {these packets} 
on its outgoing edges to the $t$ relays.
We can write the received symbols at the $t$ relays~as

\begin{align}
\left[\begin{array}{c}
X_1 \\  \vdots \\ X_t
\end{array}\right] & = \left[ \begin{array}{ccc} & & \\ & M & \\ & &  \end{array} | \begin{array}{ccc} & & \\ & V & \\ & &  \end{array} \right] \left[\begin{array}{c}
W_1 \\  \vdots \\  W_m \\ K
\end{array} \right],
\label{eq:SignRxRel}
\end{align}
where: (i) $W_i, i \in [m]$ is a column vector of $R_i$ message packets for destination $D_i$, (ii) $K$ is a column vector which contains the $k$ random packets, (iii) $M$ is a matrix of dimension $t \times \left(\sum_{i=1}^m R_i\right)$, and (iv) $V$ is a Vandermonde matrix of dimension $t \times k$.
The matrix $V$ is chosen to guarantee security as per~\cite[Lemma 3.1]{CaiSecCond}; {hence,} no matter which $k$ edges Eve wiretaps, she learns nothing about the messages $W_{[m]}$.

Each relay $i \in [t]$ forwards the received symbol $X_i$ in~\eqref{eq:SignRxRel} to the destinations it is connected.
As such, each destination will observe a subset of symbols from {$\{X_1, X_2, \ldots, X_t\}$.
Finally,} destination $D_i, i \in [m]$ selects a decoding vector and performs the inner product with $[X_1, X_2, \ldots, X_t]$. 
{This decoding} vector is chosen such that it has two characteristics: (1) it is in the left null space of $V$, i.e., in the right null space of $V^T$; this ensures that each destination is able to cancel out the random packets (encoded with the message packets); (2) it has zeros in the positions corresponding to the relays it is not connected to; this ensures that each destination uses only the symbols that {it observes.}
In other words, all the decoding vectors that $D_i$ can choose belong to the null space of the matrix $V_i$ defined~as
\begin{align}
\label{eq:Vi}
V_i = 
\begin{bmatrix}
V^T \\
C_i
\end{bmatrix},
\end{align} 
where $C_i$ is a matrix of dimension $\bar{t} \times t$, with $\bar{t}$ being the number of relays {to which $D_i$} is not connected to. 
In particular, each row of $C_i$ has all zeros except a one in the position corresponding to a relay to which $D_i$ is not connected to.
For instance, with reference to the network in Fig.~\ref{fig:combi}, we have{
\begin{align*}
& C_1  = \begin{bmatrix} 0 & 0 & 1 & 0 & 0& 0   
\\ 0 & 0 & 0 & 0 & 1 & 0
\\ 0 & 0 & 0 & 0 & 0 & 1
\end{bmatrix}, \
 C_3  = \begin{bmatrix} 1 & 0 & 0 & 0 & 0& 0   
 \\ 0 & 0 & 0 & 1 & 0 & 0
 \\ 0 & 0 & 0 & 0 & 1 & 0
 \\ 0 & 0 & 0 & 0 & 0 & 1
 \end{bmatrix}.
\end{align*}}
\section{Achieved Secure Rate Region} \label{sec:rate_region} In this section, we {first} derive the rate region achieved by the secure scheme {in Section~\ref{sec:secScheme}},
{and then present the mapping between separable and two-layer networks.}
In particular, we have the following lemma, whose proof is in Appendix~\ref{App:RateRegAch}.
\begin{lem}
	\label{lem:AchScheme}
	The secure rate region achieved by the proposed scheme is given by
	\begin{align}
	\label{eq:AchScheme}
	0 \leq \textstyle{\sum}_{i \in \mathcal{A}} R_i \leq \text{dim}\left( \sum_{i \in \mathcal{A}} N_i\right ), \ \forall  \mathcal{A} \subseteq [m],
	\end{align}
{where $N_i$ is the right null space of the matrix $V_i$ in~\eqref{eq:Vi}.}
\end{lem}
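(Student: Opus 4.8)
The plan is to characterize, for the scheme of Section~\ref{sec:secScheme}, exactly which rate tuples $(R_1,\dots,R_m)$ admit a choice of the encoding matrix $M$ together with decoding vectors in the null spaces $N_i$ that let every destination recover its message. The starting observation is that if $D_i$ uses a decoding vector $u \in N_i$, then $u^T V = 0$ (since $u$ lies in the right null space of $V^T$), so $u^T [X_1,\dots,X_t]^T = u^T M \, W_{[m]}$; i.e., each admissible decoding vector produces a fixed linear functional $u^T M$ of the message packets, with the keys $K$ cancelled. Thus the problem decouples from security (which is guaranteed by $V$ as in Section~\ref{sec:secScheme}) and becomes a purely linear-algebraic decodability question: $D_i$ recovers $W_i$ if and only if the $R_i$ coordinate functionals selecting $W_i$ lie in $\{u^T M : u \in N_i\}$.

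For the converse direction (that the scheme cannot exceed the stated region) I would fix any working instantiation and any $\mathcal{A} \subseteq [m]$. Since every $D_i$, $i\in\mathcal{A}$, decodes with vectors from $N_i \subseteq \sum_{i\in\mathcal{A}} N_i$, the $\sum_{i\in\mathcal{A}} R_i$ coordinate functionals selecting $W_i$, $i\in\mathcal{A}$, all lie in the image of the linear map $u \mapsto M^T u$ restricted to $\sum_{i\in\mathcal{A}} N_i$. These coordinate functionals are linearly independent (they select distinct message packets), so that image has dimension at least $\sum_{i\in\mathcal{A}} R_i$; since the image of a linear map on $\sum_{i\in\mathcal{A}} N_i$ has dimension at most $\dim(\sum_{i\in\mathcal{A}} N_i)$, the bound in~\eqref{eq:AchScheme} follows.

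For achievability I would show that any tuple satisfying~\eqref{eq:AchScheme} is realized. The key reduction is that it suffices to select, for each $i\in[m]$, a set of $R_i$ vectors from $N_i$ such that the whole collection of $\sum_{i} R_i$ vectors is linearly independent. Indeed, collecting these vectors as the columns of a matrix $U$ of full column rank $\sum_{i} R_i$, the system $U^T M = I$ is solvable because $U^T$ has full row rank and hence is surjective; any solution $M$ then makes each chosen $u$ satisfy $u^T M$ equal to its target coordinate functional, so every $D_i$ decodes $W_i$, while $V$ still guarantees secrecy.

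The main obstacle is precisely the existence of such a jointly independent selection, and this is where the right-hand side of~\eqref{eq:AchScheme} enters. This is the linear matroid-union (equivalently, Rado's independent-transversal) condition: given subspaces $N_1,\dots,N_m$ of $\mathbb{F}_q^t$ and integers $R_i$, one can choose $R_i$ vectors from each $N_i$ with all chosen vectors linearly independent if and only if $\sum_{i\in\mathcal{A}} R_i \leq \dim(\sum_{i\in\mathcal{A}} N_i)$ for every $\mathcal{A} \subseteq [m]$. I would instantiate this by forming the linear matroid on the (finite, since $\mathbb{F}_q$ is finite) ground set $\cup_i N_i$ and applying the transversal condition to the family consisting of $R_i$ copies of $N_i$; the binding constraints, obtained by taking all copies for the indices in $\mathcal{A}$, are exactly the inequalities in~\eqref{eq:AchScheme}. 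Since the necessity half of this matroid statement is the same dimension count as the converse, the two directions together yield the claimed region.
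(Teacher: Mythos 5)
Your proposal is correct, but it takes a genuinely different route from the paper's own proof. The two arguments share the same core reduction: decoding succeeds as soon as one can pick $R_i$ vectors from each $N_i$ with all $\sum_{i} R_i$ chosen vectors jointly linearly independent, after which the encoder is obtained by solving $TM=I$ (your $U^{T}M=I$), and security is unaffected since it rests only on $V$. Where you diverge is in establishing the existence of such a selection. The paper proves an elementary greedy lemma (Lemma~\ref{lem:LinInd}): for any permutation $\pi$ of $[m]$, one can select $R_{\pi(i)} = \text{dim}\bigl(\sum_{j \le i} N_{\pi(j)}\bigr) - \text{dim}\bigl(\sum_{j \le i-1} N_{\pi(j)}\bigr)$ vectors, which realizes the corner points; it then shows that $f(\mathcal{A}) = \text{dim}\bigl(\sum_{i \in \mathcal{A}} N_i\bigr)$ is non-decreasing and submodular, so the region in~\eqref{eq:AchScheme} is the polymatroid associated with $f$, whose vertices are exactly those corner points (citing Schrijver), and achievability of the whole region follows. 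You instead invoke Rado's independent-transversal theorem (matroid union) on the linear matroid over the finite ground set $\cup_i N_i$ with $R_i$ copies of each $N_i$; its feasibility condition is precisely~\eqref{eq:AchScheme}, so you obtain every integer point of the region in one step, and since the necessity half of Rado is your dimension count, you also get a converse showing the scheme cannot exceed~\eqref{eq:AchScheme} --- something the lemma's wording (``is given by'') implicitly claims but the paper never proves, as only achievability is needed for its capacity theorems. The trade-off: the paper's greedy construction is self-contained linear algebra and directly suggests a polynomial-time selection procedure, at the price of importing the polymatroid vertex characterization to pass from corner points to the region; your route is shorter and yields an exact characterization, at the price of using Rado/matroid union as a black box. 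Both proofs leave implicit the same routine final step of passing from integer rate tuples to the full continuous region (e.g., by time sharing), which is harmless since $f$ is integer-valued and submodular.
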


\subsection{Secure Scheme for any Separable Network}
{We will here first} show that for any separable network, a corresponding two-layer network can be created such that both networks have the same min-cut capacities $M_{\mathcal{A}}$ for all $\mathcal{A} \subseteq [m]$.
We will then show that a secure scheme designed for a two-layer network can be converted to a secure scheme on the corresponding separable network.

By Definition~\ref{def:sepNet},
a separable network $\mathcal{G}$ with $m$ destinations, can be separated into $2^m-1$ networks $\mathcal{G}_\mathcal{J}^\prime, \ \mathcal{J} \subseteq [m], \mathcal{J} \neq \emptyset$ where $\mathcal{G}_\mathcal{J}^\prime$ has min-cut capacity $M^\prime_{\mathcal{J}}$ to every subset of destinations in $\mathcal{J}$. 
To construct the corresponding two-layer network, we use the following iterative procedure: (1) we place the source node $S$ in layer $0$ of our network, and the $m$ destination nodes $D_i, i \in [m]$, in layer $2$ of our network; (2) for each $\mathcal{J} \subseteq [m]$, we add $M^\prime_{\mathcal{J}}$ relays in layer $1$ of our network; (3) for each $\mathcal{J} \subseteq [m]$, we connect: (i) the source in layer $0$ with all the added $M^\prime_{\mathcal{J}}$ relays, and (ii) all the added $M^\prime_{\mathcal{J}}$ relays with the destinations $D_i, i \in \mathcal{J}$ in layer $2$. By following the above {procedure, for} each $\mathcal{A} \subseteq [m]$, the min-cut capacity in the constructed two-layer network is $M_{\mathcal{A}}$ as given in~\eqref{eq:star_mincuts}. As such, the new constructed two-layer network has the same min-cut capacity $M_{\mathcal{A}}$ of the corresponding separable network.
	
	In what follows, we refer to the original separable network as {\em parent} separable network, and to the corresponding two-layer network as {\em child} two-layer network.
	We now show that a secure scheme designed for the child two-layer network can be converted to a secure scheme for the corresponding parent separable network.
	Towards this end, we
assume that we have a secure scheme for the child two-layer {network as described in~\eqref{eq:SignRxRel}, and proceed as follows.} 
On every graph
	$\mathcal{G}_\mathcal{J}^\prime$ in the parent separable network, we transmit (multicast) the symbols that were transmitted in the child two-layer network from the {source $S$} in layer $0$ to the set of $M^\prime_\mathcal{J}$ relays in layer $1$ that were added when constructing the child two-layer network for $\mathcal{G}_\mathcal{J}^\prime$.
	Note that this multicast towards all destinations $D_i , i \in \mathcal{J}$, is possible since
$\mathcal{G}_J^\prime$ has min-cut capacity $M^\prime_\mathcal{J}$.
With such a strategy,
at the end of the transmissions every destination in the parent separable graph still receives the same set of packets as it would have received in the child two-layer network. Thus, all the destinations can still decode their respective messages. In Appendix~\ref{app:two_layer_seperable_sec} we also prove that {this} scheme satisfies the security condition in~\cite[Lemma 3.1]{CaiSecCond}, and hence it is secure. 
Moreover, since the child { two-layer and the parent separable networks} have {equal} min-cut capacities, they have {the} same outer bound {on} the secure capacity region~\cite{GauravICITS2017}. Thus, an optimal scheme on {a child} two-layer network results in an optimal scheme on the corresponding {parent} separable network.

\section{Secure Capacity for $k=1$}
\label{sec:CapK1}

{In this section, we consider {the case when Eve wiretaps any $k=1$} 
edge of her choice, and characterize the secure capacity region. In particular, we prove the following theorem.
	\begin{thm}
		\label{thm:CapK1}
		For the {two-layer} network when {Eve wiretaps any $k=1$ edge} of her choice, the secure capacity region is
		\begin{align}
		\textstyle{\sum}_{i \in \mathcal{A}} R_i \leq M_{\mathcal{A}} - C_{\mathcal{A}}, \  \forall \mathcal{A} \subseteq [m], \label{eq:capK1}
		\end{align}
		{with $C_{\mathcal{A}}$ being} the number of connected components in an undirected graph where: (i) there are $|\mathcal{A}|$ nodes, i.e., one for each $i \in \mathcal{A}$;
		(ii) an edge between node $i$ and node $j$, $\{i,j\}\in \mathcal{A}$, $i \neq j$, exists if $\mathcal{M}_i \cap \mathcal{M}_j \neq \emptyset$.
	\end{thm}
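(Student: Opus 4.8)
The plan is to prove the two inclusions separately: that every rate tuple in~\eqref{eq:capK1} is securely achievable, and that~\eqref{eq:capK1} is a valid outer bound. For achievability I would invoke Lemma~\ref{lem:AchScheme}, so that it suffices to show $\dim\left(\sum_{i \in \mathcal{A}} N_i\right) = M_{\mathcal{A}} - C_{\mathcal{A}}$ for every $\mathcal{A} \subseteq [m]$. When $k=1$ the matrix $V$ in~\eqref{eq:SignRxRel} is a single column $[v_1,\dots,v_t]^T$ with all entries nonzero (as needed for security against one wiretapped edge), so by~\eqref{eq:Vi} the null space is $N_i = \{x \in \mathbb{F}_q^t : \mathrm{supp}(x) \subseteq \mathcal{M}_i, \ \sum_j v_j x_j = 0\}$. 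Since a sum of vectors orthogonal to $V$ is again orthogonal to $V$, every element of $\sum_{i\in\mathcal{A}} N_i$ is supported on $U := \cup_{i\in\mathcal{A}}\mathcal{M}_i$ and orthogonal to $V$; hence the sum lies inside a space of dimension $|U|-1 = M_{\mathcal{A}}-1$, and I only need to pin down by how much the connectivity structure reduces this.

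Next I would split $\mathcal{A}$ into the connected components $\mathcal{A}_1,\dots,\mathcal{A}_{C_{\mathcal{A}}}$ of the intersection graph in the statement. If $i$ and $j$ belong to different components then $\mathcal{M}_i\cap\mathcal{M}_j=\emptyset$, so the relay sets $U_p := \cup_{i\in\mathcal{A}_p}\mathcal{M}_i$ are pairwise disjoint; the corresponding $N_i$ then have disjoint supports, the sum is a direct sum across components, and $\dim\left(\sum_{i\in\mathcal{A}} N_i\right)=\sum_p \dim\left(\sum_{i\in\mathcal{A}_p} N_i\right)$. The heart of the argument is to show that within one component the sum fills the whole orthogonal space $\{x:\mathrm{supp}(x)\subseteq U_p,\ V^Tx=0\}$, of dimension $|U_p|-1$. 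To do this I would use the elementary vectors $f(a,b):=v_b e_a - v_a e_b$, which lie in $N_i$ whenever $a,b\in\mathcal{M}_i$ and which span the orthogonal space as $a,b$ range over $U_p$, together with the identity $v_y\,f(a,b)=v_b\,f(a,y)+v_a\,f(y,b)$. Given any $a,b\in U_p$, connectivity yields a path in the intersection graph whose consecutive destinations share relays; telescoping the identity along the shared relays of this path expresses $f(a,b)$ as a combination of elementary vectors each living in a single $N_i$, hence $f(a,b)\in\sum_{i\in\mathcal{A}_p}N_i$. Summing over components gives $\sum_p(|U_p|-1)=|U|-C_{\mathcal{A}}=M_{\mathcal{A}}-C_{\mathcal{A}}$, which is exactly~\eqref{eq:capK1}.

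For the converse I would localize the standard cut-set security argument to each component. For any set $\mathcal{B}$, take the min-cut of capacity $M_{\mathcal{B}}$ toward $D_{\mathcal{B}}$ and let $e$ be any single edge in this cut. Using that $X^n_e$ is independent of all messages (perfect secrecy for $k=1$) and that $D_{\mathcal{B}}$ decode from the cut, one has $H(W_{\mathcal{B}})=H(W_{\mathcal{B}}\mid X^n_e)=I(W_{\mathcal{B}};X^n_{\text{cut}}\mid X^n_e)\le H(X^n_{\text{cut}}\mid X^n_e)$, and since the conditioning removes one of the $M_{\mathcal{B}}$ unit-capacity cut edges this gives $\sum_{i\in\mathcal{B}}R_i\le M_{\mathcal{B}}-1$, the $k=1$ specialization of the outer bound of~\cite{GauravICITS2017}. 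Applying this to each $\mathcal{B}=\mathcal{A}_p$ and summing the $C_{\mathcal{A}}$ inequalities yields $\sum_{i\in\mathcal{A}}R_i\le\sum_p(M_{\mathcal{A}_p}-1)$; since the $U_p$ are disjoint the min-cut is additive, $\sum_p M_{\mathcal{A}_p}=\sum_p|U_p|=M_{\mathcal{A}}$, and the bound collapses to $M_{\mathcal{A}}-C_{\mathcal{A}}$, matching the achievable region.

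The step I expect to be the main obstacle is the within-component span-filling in the achievability: proving that the null spaces $N_i$, which each only \emph{see} the relays of a single destination, can jointly reconstruct the full $(|U_p|-1)$-dimensional orthogonal space. The telescoping identity through a shared relay is precisely what makes connectivity of the intersection graph the right combinatorial quantity, and the delicate bookkeeping is verifying that a graph path produces a chain of elementary vectors each confined to one $N_i$. By comparison the converse is routine once one notices that disjoint relay sets make the per-component security penalties simply additive.
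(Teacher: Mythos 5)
Your proposal is correct, and while it shares the paper's overall skeleton---achievability via Lemma~\ref{lem:AchScheme}, and a converse obtained by applying the per-set bound $\sum_{i \in \mathcal{B}} R_i \leq M_{\mathcal{B}} - 1$ to each connected component and using that min-cut capacities add across components with disjoint relay sets---your achievability argument takes a genuinely different route. The paper works in the dual: it writes $\text{dim}\left(\sum_{i \in \mathcal{A}} N_i\right) = t - \text{dim}\left(\cap_{i \in \mathcal{A}} V_i\right)$, with $V_i$ in~\eqref{eq:Vi} viewed as a row space, and then counts the vectors lying in the intersection of these row spaces: any such vector must be constant on the relay set of each connected component, which caps the count at $q^{C_{\mathcal{A}} + t - M_{\mathcal{A}}}$. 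You instead stay in the primal: you characterize $N_i$ explicitly as the vectors supported on $\mathcal{M}_i$ and orthogonal to $V$, and show that within one component (with relay set $U_p$, the union of the $\mathcal{M}_i$ over that component) the null spaces jointly span the entire hyperplane $\left\{ x : \mathrm{supp}(x) \subseteq U_p, \ V^T x = 0 \right\}$, by exhibiting the generators $v_b e_a - v_a e_b$ and telescoping them, via the identity $v_y\,f(a,b) = v_b\,f(a,y) + v_a\,f(y,b)$, along paths of the intersection graph through shared relays; disjointness of supports then makes the sum direct across components. Both arguments pivot on the same combinatorial fact---connectivity propagates constraints through shared relays---but they buy different things: the paper's counting argument is shorter and serves as the template reused for the $m=3$ case in Section~\ref{sec:3Dest}, whereas your construction is constructive (it hands you an explicit basis of $\sum_{i \in \mathcal{A}} N_i$, hence explicit decoding vectors), it establishes the exact equality $\text{dim}\left(\sum_{i \in \mathcal{A}} N_i\right) = M_{\mathcal{A}} - C_{\mathcal{A}}$ rather than only the lower bound actually needed, and it works verbatim for any $V$ with all entries nonzero rather than a specific Vandermonde choice. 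On the converse side you re-derive the cut-set secrecy bound from first principles; the paper simply cites it from~\cite{GauravICITS2017}, as stated in~\eqref{eq:ob1}, but your derivation is sound and the subsequent component-decomposition and min-cut-additivity step is identical to the paper's.
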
}
	\noindent\textbf{Outer Bound:} {The} secure capacity region~\cite{GauravICITS2017} is contained in:
	\begin{align}
	\textstyle{\sum}_{i \in \mathcal{A}} R_i \leq M_{\mathcal{A}} - k, \  \forall \mathcal{A} \subseteq [m]. \label{eq:ob1}
	\end{align}
	We {now} show that the outer bound {in~\eqref{eq:ob1}} can be equivalently written as in~\eqref{eq:capK1}.
	Let ${\mathcal{V}_i,} \ i \in [C_\mathcal{A}]$, represent the set of nodes in the $i$-th component of the graph constructed as explained in Theorem~\ref{thm:CapK1}.
	Then, clearly {$\mathcal{A} = \bigsqcup_{i=1}^{C_{\mathcal{A}}} \mathcal{V}_i$} and 
we can write
	\begin{align*}
	\textstyle{\sum}_{i \in \mathcal{A}} R_i  &=  \textstyle{\sum}_{j=1}^{C_\mathcal{A}} \left( \sum_{i \in {\mathcal{V}_j}} R_i \right)   \stackrel{{\rm{(a)}}}{\leq} \sum_{j=1}^{C_\mathcal{A}} \left ( M_{{\mathcal{V}_j}} - k \right ) 
	\\ & \stackrel{{\rm{(b)}}}{=} M_{{\mathcal{V}_1 \cup \mathcal{V}_2 \cup \ldots \cup \mathcal{V}_{C_{\mathcal{A}}}}}\!\!-\! k C_{\mathcal{A}}
 \stackrel{{\rm{(c)}}}{=} M_{\mathcal{A}} -  C_{\mathcal{A}},
	\end{align*}
	where: (i) the inequality in $\rm{(a)}$ follows by applying~\eqref{eq:ob1} for each set ${\mathcal{V}_i}, i \in [C_{\mathcal{A}}]$, (ii) the equality in $\rm{(b)}$ follows since, by construction, $\mathcal{M}_i \cap \mathcal{M}_j = \emptyset$ for all $i \in {\mathcal{V}_x}$ and $j \in {\mathcal{V}_y}$ with $x \neq y$, and (iii) the equality in $\rm{(c)}$ follows since {$\mathcal{A} = \bigsqcup_{i=1}^{C_{\mathcal{A}}} \mathcal{V}_i$ and $k=1$.}
	Thus,~\eqref{eq:ob1} implies~\eqref{eq:capK1}. Moreover, since $C_\mathcal{A} \geq 1$,~\eqref{eq:capK1} implies~\eqref{eq:ob1}.
	This shows that the rate region in Theorem~\ref{thm:CapK1} is an outer bound on the secure capacity region when $k=1$.
	
	We now consider an example of a {two-layer} network and show how the upper bound {derived above} applies to {it.}
	
	\noindent
	\textbf{Example:} Let $\mathcal{A} = \{2,3,4\}$, and assume that $\mathcal{M}_1 = \{1,2\}$, $\mathcal{M}_2 = \{3,4\}$, $\mathcal{M}_3 = \{4,5,6\}$ and $\mathcal{M}_4 = \{7,8\}$. 
	Then, we {construct an undirected graph such that:} (i) it has $3$ nodes since $|\mathcal{A}|=3$ and (ii) has an edge between node $2$ and node $3$ since $\mathcal{M}_2 \cap \mathcal{M}_3 = \{ 4\} \neq \emptyset$. It therefore follows that this graph has $C_{\mathcal{A}} =2$ components.
	In particular, we have
	\begin{align}
	\textstyle{\sum}_{i \in \mathcal{A}} R_i = \textstyle{\sum}_{i \in {\mathcal{V}_1}} R_i + \textstyle{\sum}_{i \in {\mathcal{V}_2}} R_i \leq M_{\{2,3,4\}} -2 = 4,
\label{eq:exUppek1}
	\end{align}
where ${\mathcal{V}_1} = \{2,3\}$ and ${\mathcal{V}_2} = \{4\}$.
	
	
	
%
	
	\noindent\textbf{Achievable Rate Region:} {We here show that the rate region in Theorem~\ref{thm:CapK1} is achieved by the scheme {described} in Section~\ref{sec:secScheme}. }
{	In particular, we show
	\begin{align*}
M_{\mathcal{A}} - C_{\mathcal{A}}  \leq \text{dim}\left( \textstyle{\sum}_{i \in \mathcal{A}} N_i \right)  &\stackrel{{\rm{(a)}}}{=}  \text{dim}\left(  \left( \cap_{i \in \mathcal{A}} V_i \right)^\perp  \right)
\\& = t - \text{dim}\left(\cap_{i \in \mathcal{A}} V_i \right),
\end{align*}
where recall that $\text{dim}\left( \sum_{i \in \mathcal{A}} N_i \right)$ is the secure rate performance of our proposed scheme in Section~\ref{sec:secScheme} (see Lemma~\ref{lem:AchScheme}).
Note that} the equality in ${\rm{(a)}}$ follows by using the property of the dual space and the rank nullity theorem, and $V_i, i \in \mathcal{A}$ is defined in~\eqref{eq:Vi}.
	In other words, we next show {that}
	\begin{align}
	\label{eq:DimInt}
	\forall \mathcal{A} \subseteq [m], \   \text{dim}\left(\cap_{i \in \mathcal{A}} V_i \right)  \leq t - M_{\mathcal{A}} + C_{\mathcal{A}}.
	\end{align}
	Towards this end, we would like to count the number of linearly independent vectors $x \in \mathbb{F}_q^t$ that belong to $\left(\cap_{i \in \mathcal{A}} V_i \right)$.

	{We note that,} by our construction: {(i) $V^T$ consists of one row of $t$ ones, and (ii) $C_i$ has} zeros in the positions indexed by $\mathcal{M}_i$. {Hence,} if a vector belongs to $V_i$, then all its components indexed by $\mathcal{M}_i$ have to be the same, {i.e., either they are all zeros, or they are all equal to a multiple of one.}
	Thus, we have $q$ choices to fill these positions indexed by $\mathcal{M}_i$.
	
	Now, consider $V_j$ with $j \in \mathcal{A}$ and $j \neq i$. 
	By using the same logic as above, if a vector belongs to $V_j$, then all its components indexed by $\mathcal{M}_j$ have to be the same {and we have $q$ choices to fill these. We now need to count the number of such choices that are consistent with the choices made to fill the positions indexed by $\mathcal{M}_i$. }
	
%
	Towards this end, we consider {two cases:} \\
	$\bullet$ {\bf Case 1:} $\mathcal{M}_i \cap \mathcal{M}_j = \emptyset$. In this case, there is no overlap in the elements indexed by $\mathcal{M}_i$ and $\mathcal{M}_j$ and hence we can select all {the available $q$ choices for the positions indexed by $\mathcal{M}_j$}; \\
	$\bullet$ {\bf Case 2:} $\mathcal{M}_i\ \cap \mathcal{M}_j \neq \emptyset$. There is {an overlap} in the elements indexed by $\mathcal{M}_i$ and $\mathcal{M}_j$. Since we have already fixed the elements indexed by $\mathcal{M}_i$, there is no choice for the elements indexed by $\mathcal{M}_j$ (as all the elements have to be the same).
	
	By iterating the same reasoning as above for all $i \in \mathcal{A}$, we conclude that we can fill all the positions indexed by $\cup_{i \in \mathcal{A}} \mathcal{M}_i$ of a vector $x \in \mathbb{F}_q^t$ and make sure that $x \in \left ( \cap_{i \in \mathcal{A}} V_i \right )$ in $q^{C_{\mathcal{A}}}$ ways. This is because, there are $C_\mathcal{A}$ connected components, and for each of these components we have only $q$ choices to fill the corresponding positions {in the vector $x$ (i.e., the positions that correspond to the relays to which at least one of the destinations inside that component is connected).} Once we fix any position inside a component, in fact all the other positions inside that component have to be the same, and thus we have no more freedom in choosing the other positions. Moreover, the remaining $t - M_{\mathcal{A}}$ {positions of $x$} can be filled with any value in $\mathbb{F}_q$ and for this we have $q^{t - M_{\mathcal{A}}}$ possible choices.
	Therefore, the number of vectors $x \in \mathbb{F}_q^t$ that belong to $\left(\cap_{i \in \mathcal{A}} V_i \right)$ is at most $q^{C_{\mathcal{A}}+t-M_{\mathcal{A}}}$, which implies $
	\forall \mathcal{A} \subseteq [m], \  \text{dim}\left(\cap_{i \in \mathcal{A}} V_i \right)  \leq t - M_\mathcal{A} + C_\mathcal{A}$.
	This proves that the {secure scheme in Section~\ref{sec:secScheme} achieves the} rate region in Theorem~\ref{thm:CapK1}.
	We {now} illustrate our method of identifying vectors that belong to $\cap_{i \in \mathcal{A}} V_i$ through {an} example.
	
	\noindent \textbf{Example:} Let $t = 8$, $m = 4$, $\mathcal{M}_1 = \{1,2\}$, $\mathcal{M}_2 = \{3,4\}$, $\mathcal{M}_3 = \{4,5,6\}$ and $\mathcal{M}_4 = \{7,8\}$.
	Let $\mathcal{A} = \{2,3,4\}$.  
	We want to count the number of vectors $x \in \mathbb{F}_q^8$ such that {$x \in V_2 \cap V_3 \cap V_4$.}
	We use the following iterative procedure:\\
	$\bullet$ {For} $x$ {to} belong to $V_2$ {its} elements in the $3$rd and $4$th positions have to be the same {since} $\mathcal{M}_2 = \{ 3,4\}$. Thus, we have $q$ choices {to fill the $3$rd and $4$th positions}.\\
	$\bullet$ {For} $x$ {to} belong to $V_3$, {its} elements in the $4$th, $5$th and $6$th positions have to be {equal} {since} $\mathcal{M}_3 = \{ 4,5,6\}$. However, the element in the $4$th position has already been fixed in selecting vectors that belong to $V_2$. {Thus, there is no} further choice {in filling the $5$th and $6$th position.}\\
$\bullet$ {For} $x$ {to} belong to $V_4$, {its} elements in the $7$th and $8$th positions have to be the same {since} $\mathcal{M}_4 = \{ 7,8\}$. Since in the previous two steps, we have not filled yet the elements in these positions, then we have $q$ possible ways to fill the elements in the $7$th and $8$th positions. \\
$\bullet$ Moreover, we can fill the elements in the $1$st and $2$nd positions of $x$ in $q^2$ possible ways. \\
	With the above procedure we get that $\text{dim}\left(\cap_{i \in\{2,3,4\}} V_i \right) = 4$, which is equal to the upper bound that we computed in~\eqref{eq:exUppek1} {for the same example}.

\section{Secure Capacity For $m=3$}
\label{sec:3Dest}
In this section, we consider the case {$m=3$,} and we characterize the secure capacity region {through}
{the theorem~below.
	\begin{thm}
		\label{thm:Cap3Dest}
		For {a two-layer} network with $m=3$ destinations, the secure capacity region is given by
		\begin{align}
		\textstyle{\sum}_{i \in \mathcal{A}} R_i \leq M_{\mathcal{A}} - k, \  \forall \mathcal{A} \subseteq [m]. \label{eq:cap}
		\end{align}
	\end{thm}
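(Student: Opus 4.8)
The plan is to prove achievability, since the region \eqref{eq:cap} is literally the outer bound \eqref{eq:ob1} of \cite{GauravICITS2017} and hence already contains the secure capacity region. By Lemma~\ref{lem:AchScheme} the scheme of Section~\ref{sec:secScheme} attains the polytope $P_{\text{ach}}=\{R\ge 0:\ \sum_{i\in\mathcal{A}}R_i\le f(\mathcal{A}),\ \forall\mathcal{A}\subseteq[3]\}$ with $f(\mathcal{A})=\dim(\sum_{i\in\mathcal{A}}N_i)$, and since the scheme is reliable and secure we have $P_{\text{ach}}\subseteq P_{\text{ob}}$, where $P_{\text{ob}}=\{R\ge 0:\ \sum_{i\in\mathcal{A}}R_i\le M_{\mathcal{A}}-k,\ \forall\mathcal{A}\}$. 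So it remains to show $P_{\text{ob}}\subseteq P_{\text{ach}}$. My first step is to observe that a down-closed polytope of this form does not change if each defining value $h(\mathcal{A})$ is replaced by its partition lower envelope $\tilde h(\mathcal{A})=\min_{\mathcal{P}}\sum_{\mathcal{B}\in\mathcal{P}}h(\mathcal{B})$ over partitions $\mathcal{P}$ of $\mathcal{A}$, because the constraint for $\mathcal{A}$ is implied by those of the blocks of any partition. Hence $P_{\text{ach}}=P_{\text{ob}}$ reduces to the identity $\tilde f=\tilde g$ with $g(\mathcal{A})=M_{\mathcal{A}}-k$. (This is the same phenomenon that reconciles Theorem~\ref{thm:CapK1}: for $k=1$ the value $M_{\mathcal{A}}-C_{\mathcal{A}}$ is exactly $\tilde g(\mathcal{A})$.)

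Next I would dispose of the small subsets. For a singleton, restricting $V^{T}$ in \eqref{eq:Vi} to the columns indexed by $\mathcal{M}_i$ gives a full-rank $k\times M_{\{i\}}$ Vandermonde block, so $f(\{i\})=M_{\{i\}}-k=g(\{i\})$. For a pair, the exact two-subspace identity together with $\dim(N_i\cap N_j)=[\,M_{\cap\{i,j\}}-k\,]^+$ (null-space vectors supported on $\mathcal{M}_i\cap\mathcal{M}_j$) yields $f(\{i,j\})=\min(M_{\{i,j\}}-k,\ M_{\{i\}}+M_{\{j\}}-2k)=\min(g(\{i,j\}),g(\{i\})+g(\{j\}))$. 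From these formulas one checks directly that $\tilde f(\mathcal{A})=\tilde g(\mathcal{A})$ for every $\mathcal{A}$ with $|\mathcal{A}|\le 2$, and moreover that the minimum of $\sum_{\mathcal{B}}f(\mathcal{B})$ and of $\sum_{\mathcal{B}}g(\mathcal{B})$ over the \emph{proper} partitions of $\{1,2,3\}$ agree; call this common value $G$.

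It then remains to treat $\mathcal{A}=\{1,2,3\}$. Writing $M_{123}:=M_{\{1,2,3\}}$, the previous paragraph gives $\tilde g(\{1,2,3\})=\min(M_{123}-k,\ G)$ and $\tilde f(\{1,2,3\})=\min(\dim(N_1+N_2+N_3),\ G)$. Since $\dim(N_1+N_2+N_3)$ is bounded above by $\sum_{\mathcal{B}\in\mathcal{P}}f(\mathcal{B})$ for every proper partition $\mathcal{P}$ (subadditivity of $\dim(\sum\cdot)$), hence by $G$, and is also at most $M_{123}-k$ (its members are supported on $\mathcal{M}_1\cup\mathcal{M}_2\cup\mathcal{M}_3$ and killed by the $k$ independent rows of $V^{T}$), the whole theorem collapses to the single identity
\begin{align}
\dim(N_1+N_2+N_3)=\min(M_{123}-k,\ G). \label{eq:key3}
\end{align}
The remarks just made give the ``$\le$'' direction, so only the matching lower bound remains.

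Proving the lower bound in \eqref{eq:key3} is the main obstacle, and it is exactly where three-subspace geometry bites: ordinary inclusion--exclusion only upper bounds $\dim(N_1+N_2+N_3)$ in terms of pairwise and triple intersections, the wrong direction for achievability, so the three-way interaction cannot be read off from pairwise data. My plan is to evaluate $\dim(N_1+N_2+N_3)=\dim(N_1+N_2)+\dim(N_3)-\dim\big((N_1+N_2)\cap N_3\big)$ and to pin down the interaction term exactly. A support argument shows $(N_1+N_2)\cap N_3\subseteq\{x\in\ker V^{T}:\ \text{supp}(x)\subseteq(\mathcal{M}_1\cup\mathcal{M}_2)\cap\mathcal{M}_3\}$, which bounds it from above; however this bound is \emph{not} always tight, because one must still decide which null-space vectors on $(\mathcal{M}_1\cup\mathcal{M}_2)\cap\mathcal{M}_3$ genuinely split as $x_1+x_2$ with $x_i\in N_i$. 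Here I would invoke the MDS/genericity property of the Vandermonde points (every $k$ of them independent, so all relevant restricted null spaces attain their combinatorial dimension), combined with a short case split on whether each pairwise overlap $M_{\cap\{i,j\}}$ and the triple overlap $|\mathcal{M}_1\cap\mathcal{M}_2\cap\mathcal{M}_3|$ exceed $k$. Carrying this out determines $\dim(N_1+N_2+N_3)$ exactly, matches \eqref{eq:key3}, and closes the argument; the rate triple $(2,2,1)$ of Fig.~\ref{fig:combi} is the representative case in which $G>M_{123}-k$ and the full union null space is realized.
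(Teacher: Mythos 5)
Your setup is correct and coincides with the paper's own proof up to the final step: the partition-envelope rewriting of the outer bound, the singleton value $\dim(N_i)=M_{\{i\}}-k$, the pair value $\dim(N_i+N_j)=\min\{M_{\{i,j\}}-k,\,M_{\{i\}}+M_{\{j\}}-2k\}$ obtained via $\dim(N_i\cap N_j)=[M_{\cap\{i,j\}}-k]^+$, and the reduction of the theorem to a single lower bound on $\dim(N_1+N_2+N_3)$ are exactly the paper's Cases 1 and 2 and its condition~\eqref{eq:capacity_bound}. The gap is that the remaining step --- which you yourself call ``the main obstacle'' --- is never carried out: your last paragraph is a plan (``I would invoke the MDS/genericity property \ldots carrying this out determines the dimension exactly''), not an argument. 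Moreover, the route you sketch hits a concrete obstruction. Your only quantitative tool is the support bound $\dim\bigl((N_i+N_j)\cap N_\ell\bigr)\le[M_{\cap\{\ell,\{i,j\}\}}-k]^+$, and this bound is strictly loose in exactly the hard regime, for \emph{every} choice of permutation. Take $k=3$, $t=6$, $\mathcal{M}_1=\{1,2,3,4\}$, $\mathcal{M}_2=\{1,2,5,6\}$, $\mathcal{M}_3=\{3,4,5,6\}$: every pairwise overlap equals $2<k$ and $M_{\cap\{\ell,\{i,j\}\}}=4>k$ for every permutation, so your estimate gives $\dim(N_i+N_j)+\dim(N_\ell)-[4-3]^+=2+1-1=2$ for all three orderings, while the required (and, by the paper's Appendix~\ref{app:FurtherAnalCase3} count, correct) value is $M_{\{1,2,3\}}-k=3$. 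So the theorem cannot be recovered from pairwise data plus the support bound; an additional mechanism is needed, and nothing in your genericity/case-split sketch supplies it.

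The paper's mechanism is to stay in the dual picture throughout: it upper bounds $\dim(V_1\cap V_2\cap V_3)$ (intersection of row spaces, see~\eqref{eq:dimPlusdimInt}) by a degrees-of-freedom count --- $t-M_{\{1,2,3\}}$ free coordinates, $k$ degrees for $\mathcal{M}_i$, $[k-M_{\cap\{i,j\}}]^+$ for $\mathcal{M}_j$, $[k-M_{\cap\{\ell,\{i,j\}\}}]^+$ for $\mathcal{M}_\ell$ --- and then, in Appendix~\ref{app:FurtherAnalCase3}, adds the crucial \emph{backtracking} correction for precisely the regime of the example above (all pairwise overlaps $<k$ but $M_{\cap\{3,\{1,2\}\}}>k$): once more than $k$ coordinates of $\mathcal{M}_3$ have been pre-determined, those values must themselves be consistent with the $k$-dimensional Vandermonde row space restricted to $\mathcal{M}_3$, which retroactively removes $\min\{k-M_{\cap\{1,2\}},\,M_{\cap\{3,\{1,2\}\}}-k\}$ of the degrees of freedom already spent on $\mathcal{M}_2$. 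In primal language, this is exactly the statement that a vector of $\ker V^T$ supported on $(\mathcal{M}_1\cup\mathcal{M}_2)\cap\mathcal{M}_3$ need not decompose as $x_1+x_2$ with $x_i\in N_i$ --- the phenomenon you correctly flag but leave unresolved. Note also that the operative case split is on $M_{\cap\{i,j\}}$ and $M_{\cap\{\ell,\{i,j\}\}}$ versus $k$ (the paper's Cases 3A/3B), not on the triple intersection $|\mathcal{M}_1\cap\mathcal{M}_2\cap\mathcal{M}_3|$ that you propose. This backtracking count is the missing content of your final step.
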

	Clearly the {rate} region in~\eqref{eq:cap} is an outer bound on the secure capacity region~\cite{GauravICITS2017} and can
be equivalently written as}
\begin{align*}
\textstyle{\sum}_{i \in \mathcal{A}} R_i & \leq \!\!\!   \min_{ \small \begin{array}{c}\mathcal{P}:\!\!\bigsqcup\limits_{{\mathcal{Q}} \in \mathcal{P}}\!\!{\mathcal{Q}} \!=\! \mathcal{A} \end{array} } \!\!\left\{ \sum_{ {\mathcal{Q}} \in \mathcal{P}} M_{ \mathcal{Q}} - |\mathcal{P}|k   \right \}, \ \forall \mathcal{A} \subseteq [m], 
\end{align*}
{where $\mathcal{P}$ is a partition of $\mathcal{A}$.}
{We will show} that $\forall \mathcal{A} \subseteq [m]$, 
\begin{align}
\text{dim}\left( \textstyle{\sum}_{i \in \mathcal{A}} N_i\right) \!\geq \!\!\! \min_{\small \begin{array}{c}\mathcal{P}:\! \bigsqcup\limits_{{\mathcal{Q}} \in \mathcal{P}}\!\!{\mathcal{Q}} \!=\! \mathcal{A} \end{array} }  \!\!\left\{ \sum_{ {\mathcal{Q}} \in \mathcal{P}} M_{\mathcal{Q}} - |\mathcal{P}|k   \right\}. \label{eq:capacity_bound}
\end{align}
%
We {prove~\eqref{eq:capacity_bound} by considering three different cases.} 
	
\noindent \textbf{Case 1: $|\mathcal{A}| = 1$, i.e.,} 
		{$\mathcal{A}=\{i\}$. For this case,}
		$V_i$ in~\eqref{eq:Vi} has $ k + t - M_{\{i\}}$ rows. 
		All these rows are linearly independent since:
		(i) the rows of $V^T$ are linearly independent as $V$ is {a} Vandermonde matrix, 
		(ii) $C_i$ is full row rank {by construction,} 
		and (iii) any linear combination of the rows of $V^T$ will have a weight of at least $t-k+1$ (from the Vandermonde property), whereas any linear combination of {the rows of $C_i$} will have a weight of at most $t-M_{\{i\}} \leq t-k$. It therefore follows that, {$\forall i \in [3]$, we have that} $
		\text{dim}(N_i) = t - {\text{dim}(V_i)} 
		= t - (k+t- M_{\{i\}}) 
		=  M_{\{i\}} - k$,
		where the first equality follows by using the rank-nullity theorem.
		{Thus,~\eqref{eq:capacity_bound} is satisfied.}
		%

\noindent \textbf{Case 2: $|\mathcal{A}| = 2$, i.e.,} 
		$\mathcal{A}=\{i,j\}$. {$\forall (i, j) \in [3]^2, i \neq j$,} 
		\begin{align}
		\text{dim}(N_i + N_j) & = \text{dim}(N_i) + \text{dim}(N_j) - \text{dim} (N_i \cap N_j) \nonumber \\
		& = M_{\{i\}} \!+\! M_{\{j\}} \!-\! 2k  \!-\! \text{dim} (N_i \cap N_j),  \label{eq:dim_N_i_N_j}
		\end{align}
{where the second equality follows by using $\text{dim}(N_i)$ derived in Case~1.}
		Thus, we need to compute $\text{dim} (N_i \cap N_j)$. {Note that, by definition, $N_i \cap N_j$ is the right null space of 
			\begin{align*}
			V^\star_{ij} = 
			\begin{bmatrix} V_i \\ V_j	\end{bmatrix}
			\stackrel{\eqref{eq:Vi}}{=}
			\begin{bmatrix}
			V^T \\
			C_i \\
			C_j
			\end{bmatrix}
			=
			\begin{bmatrix}
			V^T\\
			C_{ij}
			\end{bmatrix},
			\end{align*}
			where in the last {equality, 
$C_{ij}$} is a matrix of dimension $(t -M_{\cap \{i,j\}} )\times t$, with all unique rows. Using a similar argument {as in Case~1} 
			the number of linearly independent rows of $V^\star_{ij}$ is
			$\min \{ t, t - M_{\cap \{i,j\}} + k \}$. Thus,
			\begin{align*}
			\text{dim} (N_i \cap N_j) & = t - \min\{t, t - M_{\cap \{i,j\}} + k \} \\
			& = \max \{ 0, M_{\cap \{i,j\}} - k \}= {[M_{\cap \{i,j\}} - k]^+,}
			\end{align*}
			where the first equality follows from the rank-nullity theorem.
			We can now write $\text{dim}(N_i + N_j)$  from~\eqref{eq:dim_N_i_N_j} as ${\text{dim}(N_i + N_j) = \min\left \{ M_{\{i\}} + M_{\{i\}} - 2k, M_{\{i,j\}} - k \right \} }$, and the condition in~\eqref{eq:capacity_bound} is satisfied.
		}

\noindent \textbf{Case 3: $\mathcal{A} = \{1,2,3\}$}. We will {compute} 
		\begin{align}
		\text{dim} (N_1+N_2+N_3) = t - \text{dim}(V_1 \cap V_2 \cap V_3), \label{eq:dimPlusdimInt}
		\end{align}
{that is,} the number of linearly independent {vectors} $x \in \mathbb{F}_q^t$ that belong to $V_1 \cap V_2 \cap V_3$. 
{Similar} to the case $k=1$, we have $t- M_{\{1,2,3\}}$ degrees of freedom to fill the positions of $x$ corresponding to $[t] \setminus \cup_{i \in [3]} \mathcal{M}_i$. We now select a permutation $(i,j,\ell)$ of $(1,2,3)$. 
{In order for $x$ to belong to $V_i$, the positions of $x$} corresponding to $\mathcal{M}_i$ can be filled {with $k$ degrees} of freedom. {This is because: (i) $C_i$ in~\eqref{eq:Vi} has zeros in the positions specified by $\mathcal{M}_i$, and (ii)} $V^T$ has $k$ rows.
Then, to fill the positions of $x$ specified by $\mathcal{M}_j$ so that $x \in V_j$, we have at most {$[k-M_{\cap \{i,j\}}]^+$ degrees} of freedom. This is because the positions of $x$ corresponding to $\mathcal{M}_i \cap \mathcal{M}_j$ {are already fixed.}
Finally, to fill the {positions of $x$} corresponding to $\mathcal{M}_\ell$ {so} that $x \in V_\ell$, we have at most $ [k - M_{\cap \{\ell, \{i,j\} \}}]^+$ degrees of freedom. This is because the positions of $x$ corresponding to $\mathcal{M}_\ell \cap (\mathcal{M}_i \cup \mathcal{M}_j)$ are already fixed.
Thus, we obtain $\text{dim}(V_1 \cap V_2 \cap V_3)  \leq k + {[k- M_{\cap \{i,j\}}]^+}  + {[k - M_{\cap \{\ell, \{i,j\} \}} ]^+} +  t- {M_{\{1,2,3\}}}$, {which when substituted in~\eqref{eq:dimPlusdimInt}, satisfies~\eqref{eq:capacity_bound} (see Appendix~\ref{app:FurtherAnalCase3}).}
This proves Theorem~\ref{thm:Cap3Dest}.
		
{We now conclude this section {with
the following lemma.}
	

\begin{lem}
\label{lem:GenralCase}
The scheme in Section~\ref{sec:secScheme} achieves the secure capacity region of a two-layer network with arbitrary values of $k$ and $m$ whenever $\mathcal{M}_{\cap \{i,j\}} \geq  k$ for all $(i,j) \in [m]^2, i \neq j$.
\end{lem}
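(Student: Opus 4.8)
The plan is to reduce the lemma to a single dimension inequality and then establish it by a counting/peeling argument identical in spirit to the $k=1$ and $m=3$ cases already carried out. By Lemma~\ref{lem:AchScheme} the scheme achieves the polytope $\{\sum_{i\in\mathcal{A}}R_i\le\dim(\sum_{i\in\mathcal{A}}N_i)\}$, and since the scheme is secure its region lies inside the outer bound~\eqref{eq:cap}; hence it suffices to prove $\dim(\sum_{i\in\mathcal{A}}N_i)\ge M_{\mathcal{A}}-k$ for every $\mathcal{A}\subseteq[m]$. Using $\dim(\sum_{i\in\mathcal{A}}N_i)=t-\dim(\cap_{i\in\mathcal{A}}V_i)$ (dual space and rank--nullity, exactly as in the $k=1$ case), this is equivalent to
\begin{align}
\dim\left(\cap_{i\in\mathcal{A}}V_i\right)\le t-M_{\mathcal{A}}+k,\quad\forall\mathcal{A}\subseteq[m].\label{eq:genCaseInt}
\end{align}
I will also note that the very same overlap estimate below shows that, under the hypothesis, the trivial partition attains the minimum in the equivalent (partition) form of the outer bound, so that the value $M_{\mathcal{A}}-k$ is indeed the capacity rather than merely an intermediate quantity.

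To prove~\eqref{eq:genCaseInt} I would count the vectors $x\in\mathbb{F}_q^t$ lying in $\cap_{i\in\mathcal{A}}V_i$. Let $\mathcal{R}$ be the row space of $V^T$, a $k$-dimensional Reed--Solomon-type code of length $t$; since $V$ is Vandermonde with distinct evaluation points, $\mathcal{R}$ is MDS, and so is every coordinate restriction of $\mathcal{R}$ of length at least $k$, meaning that any $k$ of its coordinates determine a codeword uniquely. Because $C_i$ in~\eqref{eq:Vi} vanishes exactly on the positions indexed by $\mathcal{M}_i$, a vector $x$ lies in the row space of $V_i$ if and only if $x|_{\mathcal{M}_i}$ equals the restriction to $\mathcal{M}_i$ of some codeword of $\mathcal{R}$, while $x$ is free on $[t]\setminus\mathcal{M}_i$; the bound $M_{\{i\}}\ge M_{\cap\{i,j\}}\ge k$ guarantees that $\mathcal{R}|_{\mathcal{M}_i}$ is still $k$-dimensional and MDS.

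I would then peel the indices of $\mathcal{A}$ in an arbitrary order $i_1,\dots,i_r$. The $t-M_{\mathcal{A}}$ positions outside $\cup_{i\in\mathcal{A}}\mathcal{M}_i$ are free, and filling $x|_{\mathcal{M}_{i_1}}\in\mathcal{R}|_{\mathcal{M}_{i_1}}$ uses $k$ degrees of freedom. For each $s\ge2$, the already fixed positions of $\mathcal{M}_{i_s}$ include $\mathcal{M}_{i_s}\cap\mathcal{M}_{i_1}$, which by hypothesis has at least $k$ elements; since $\mathcal{R}|_{\mathcal{M}_{i_s}}$ is MDS of dimension $k$, prescribing at least $k$ of its coordinates leaves at most one admissible codeword, i.e. $[k-M_{\cap\{i_1,i_s\}}]^+=0$ further degrees of freedom. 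The number of admissible $x$ is therefore at most $q^{\,t-M_{\mathcal{A}}+k}$, which yields~\eqref{eq:genCaseInt} and hence the claim.

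The hard part will be the gluing step: one must verify that the all-pairs hypothesis $M_{\cap\{i,j\}}\ge k$ together with the MDS property really does kill every residual degree of freedom for each newly peeled set, regardless of the peeling order, and in fact forces every $x|_{\mathcal{M}_i}$ to come from one common codeword of $\mathcal{R}$. The two delicate facts to pin down carefully are that punctured Reed--Solomon codes of length at least $k$ remain MDS of dimension $k$, and that overlapping the single set $\mathcal{M}_{i_1}$ alone in at least $k$ coordinates already suffices to determine each subsequent codeword, which is precisely what the pairwise hypothesis supplies.
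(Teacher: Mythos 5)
Your proposal is correct and follows essentially the same route as the paper: both reduce the lemma to the bound $\dim\left(\cap_{i\in\mathcal{A}}V_i\right)\le t-M_{\mathcal{A}}+k$ via duality and rank--nullity, and both establish it by the peeling/counting argument in which the first set $\mathcal{M}_{i_1}$ consumes $k$ degrees of freedom and every subsequent set contributes $[k-M_{\cap\{i_j,\cdot\}}]^+=0$ further degrees of freedom thanks to the pairwise overlap hypothesis. The only difference is cosmetic: the paper anchors each step on the overlap with the union of previously peeled sets (lower-bounded via $\mathcal{M}_{\cap\{i_j,i_{j-1}\}}\ge k$), whereas you anchor on $\mathcal{M}_{i_1}$ alone and spell out explicitly the punctured Reed--Solomon/MDS fact that the paper leaves implicit.
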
}

\begin{proof}{We can} compute $\text{dim} (\cap_{i=1}^m V_i)$ as follows:
	
$\begin{aligned}
\text{dim} (\cap_{i\in \mathcal{A}} V_i)  & \stackrel{{\rm{(a)}}}{\leq} t - M_{\mathcal{A}} + k + {[k -\mathcal{M}_{\cap \{i_1,i_2\}} ]^+} \\  & \quad +  \textstyle\sum_{j=3}^m {[k -\mathcal{M}_{\cap \{i_j, \{i_1, i_2,  \ldots, i_{j-1} \} \}} ]^+} \\
& {\stackrel{{\rm{(b)}}}{\leq}} k + t - M_{\mathcal{A}},
\end{aligned}$
where: { $\rm{(a)}$ follows by extending to arbitrary $m$} the iterative {algorithm for} Case~3 above to select $x \in \cap_{i=1}^m V_m$, {and $\rm{(b)}$ follows since} $
\mathcal{M}_{\cap \{i_j, \{i_1, i_2,  \ldots, i_{j-1} \} \}} \geq \mathcal{M}_{\cap \{i_j,i_{j-1} \}} \geq k$.
{By using the property of the dual space and the rank-nullity theorem, we obtain} $\text{dim} (\sum_{i \in \mathcal{A}} N_i) \geq  M_{\mathcal{A}} - k$, {which} satisfies~\eqref{eq:capacity_bound} $\forall \mathcal{A} \subseteq [m]$. {This proves} Lemma~\ref{lem:GenralCase}.
\end{proof}

\bibliographystyle{IEEEtran}
\bibliography{tit18}

\begin{thebibliography}{10}
\providecommand{\url}[1]{#1}
\csname url@samestyle\endcsname
\providecommand{\newblock}{\relax}
\providecommand{\bibinfo}[2]{#2}
\providecommand{\BIBentrySTDinterwordspacing}{\spaceskip=0pt\relax}
\providecommand{\BIBentryALTinterwordstretchfactor}{4}
\providecommand{\BIBentryALTinterwordspacing}{\spaceskip=\fontdimen2\font plus
\BIBentryALTinterwordstretchfactor\fontdimen3\font minus
  \fontdimen4\font\relax}
\providecommand{\BIBforeignlanguage}[2]{{%
\expandafter\ifx\csname l@#1\endcsname\relax
\typeout{** WARNING: IEEEtran.bst: No hyphenation pattern has been}%
\typeout{** loaded for the language `#1'. Using the pattern for}%
\typeout{** the default language instead.}%
\else
\language=\csname l@#1\endcsname
\fi
#2}}
\providecommand{\BIBdecl}{\relax}
\BIBdecl

\bibitem{gauravTIT}
G.~K. Agarwal, M.~Cardone, and C.~Fragouli, ``On secure network coding for
  multiple unicast traffic,'' \emph{ar{X}iv:1901.02787}, January 2019.

\bibitem{GauravICITS2017}
------, ``Secure network coding for multiple unicast: On the case of single
  source,'' in \emph{Information Theoretic Security}, 2017, pp. 188--207.

\bibitem{shannon1949communication}
C.~E. Shannon, ``Communication theory of secrecy systems,'' \emph{Bell Labs
  Technical Journal}, vol.~28, no.~4, pp. 656--715, 1949.

\bibitem{Wyner}
A.~D. Wyner, ``The wire-tap channel,'' \emph{The Bell System Technical
  Journal}, vol.~54, no.~8, pp. 1355--1387, 1975.

\bibitem{Maurer}
U.~M. Maurer, ``Secret key agreement by public discussion from common
  information,'' \emph{IEEE Trans. Inf. Theory,}, vol.~39, no.~3, pp. 733--742,
  1993.

\bibitem{cai2002secure}
N.~Cai and R.~W. Yeung, ``Secure network coding,'' in \emph{IEEE International
  Symposium on Information Theory}, 2002, pp. 323--.

\bibitem{feldman2004capacity}
J.~Feldman, T.~Malkin, C.~Stein, and R.~Servedio, ``On the capacity of secure
  network coding,'' in \emph{42nd Annual Allerton Conference on Communication,
  Control, and Computing}, 2004, pp. 63--68.

\bibitem{el2007wiretap}
S.~Y. El~Rouayheb and E.~Soljanin, ``On wiretap networks {II},'' in \emph{IEEE
  International Symposium on Information Theory}, 2007, pp. 551--555.

\bibitem{layeredNW}
N.~Cai and M.~Hayashi, ``Secure network code for adaptive and active attacks
  with no-randomness in intermediate nodes,'' \emph{ar{X}iv:1712.09035}.

\bibitem{CaiSecCond}
N.~{Cai} and R.~W. {Yeung}, ``A security condition for multi-source linear
  network coding,'' in \emph{2007 IEEE International Symposium on Information
  Theory}, June 2007, pp. 561--565.

\bibitem{schrijver2003combinatorial}
A.~Schrijver, \emph{Combinatorial optimization: polyhedra and
  efficiency}.\hskip 1em plus 0.5em minus 0.4em\relax Springer Science \&
  Business Media, 2003, vol.~B.

\end{thebibliography}

\clearpage
\appendices

\section{Proof of Lemma~\ref{lem:AchScheme}}
\label{App:RateRegAch}

{
We let $T$ be a matrix of dimension $\left(\sum_{i=1}^m R_i\right) \times t$ that, for each destination $D_i, i \in [m],$ contains $R_i$ decoding vectors that belong to $N_i$. 
Mathematically, we have
\begin{align}
T & = \left[  \begin{array}{c} ---- d^{(1)}_1 ---- \\ ---- d^{(1)}_2 ---- \\ \vdots \\  ---- d^{(1)}_{R_1} ---- \\ ---- d^{(2)}_1 ---- \\ \vdots \\ ---- d^{(m)}_{R_m} ----
\end{array} \right] \label{eq:decoding_vectors},
\end{align}
where $d^{(i)}_j$ denotes the $j$-th decoding vector {(of length $t$)} selected from the null space $N_i$, with $ i \in [m], j \in [R_i]$.
{Note that, if for all $i \in [m]$, we can select $R_i$ decoding vectors from $N_i$ such that all the $d_j^{(i)}$ in~\eqref{eq:decoding_vectors} are linearly independent (i.e., such that $T$ has a full row rank), then}
it is possible to construct the matrix $M$ in~\eqref{eq:SignRxRel} such that
\begin{align*}
TM = I,
\end{align*}
which ensures that all the destinations are able to correctly decode their intended message as
{
	\begin{align*}
	\begin{bmatrix}
	\hat{W}_1
	\\
	\vdots
	\\
	\hat{W}_m
	\end{bmatrix}
	=
	T
	\begin{bmatrix}
	X_1
	\\
	\vdots
	\\
	X_t
	\end{bmatrix}
	\stackrel{\eqref{eq:SignRxRel}}{=}
	\begin{bmatrix}
	TM & TV
	\end{bmatrix}
	\begin{bmatrix}
	W_1
	\\
	\vdots
	\\
	W_m
	\\
	K
	\end{bmatrix} \\
	= TM 
	\begin{bmatrix}
	W_1
	\\
	\vdots
	\\
	W_m
	\end{bmatrix}
	+TVK
	=
	\begin{bmatrix}
	W_1
	\\
	\vdots
	\\
	W_m
	\end{bmatrix}.
	\end{align*}
}
We propose an iterative algorithm to select $R_i, i \in [m]$ decoding vectors from $N_i$ such that $T$ in~\eqref{eq:decoding_vectors} has indeed a full row rank.
The performance of the proposed algorithm is provided in the following lemma.
\begin{lem}
	\label{lem:LinInd}
	For any given permutation $\pi = \left \{ \pi(1), \ldots, \pi(m) \right \}$ of $[m]$, it is possible to select
	\begin{align}
	R_{\pi(i)} = \text{dim} \left(\sum_{j = 1}^i N_{\pi(j)} \right) - \text{dim} \left(\sum_{j =1}^{i-1} N_{\pi(j)} \right), \ i \in [m],  \label{eq:AchCornerPt}
	\end{align}
	vectors from $N_{\pi(i)}$ so that all the $\sum_{i=1}^m R_i$ selected vectors are linearly independent.
\end{lem}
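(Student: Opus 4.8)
The plan is to prove Lemma~\ref{lem:LinInd} by induction on the stage index $i$ in the permutation order, maintaining the invariant that after the $i$-th stage the vectors selected so far form a \emph{basis} of the partial sum $\sum_{j=1}^{i} N_{\pi(j)}$. Writing $S_i := \sum_{j=1}^{i} N_{\pi(j)}$ with $S_0 := \{0\}$, the prescription in~\eqref{eq:AchCornerPt} is exactly $R_{\pi(i)} = \text{dim}(S_i) - \text{dim}(S_{i-1})$, so the target count at each stage equals the increment in dimension, and the telescoping sum yields $\sum_{i=1}^{m} R_{\pi(i)} = \text{dim}(S_m) = \text{dim}\left( \sum_{i=1}^{m} N_i \right)$, which is precisely the rank that a full-row-rank $T$ must have.

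For the base case $i=1$ I would take any basis of $N_{\pi(1)}$, which has $R_{\pi(1)} = \text{dim}(N_{\pi(1)})$ vectors and is trivially independent and a basis of $S_1$. For the inductive step, suppose the $\text{dim}(S_{i-1})$ vectors chosen so far form a basis of $S_{i-1}$. I then need to adjoin $R_{\pi(i)} = \text{dim}(S_i) - \text{dim}(S_{i-1})$ further vectors, \emph{drawn from $N_{\pi(i)}$}, so that the enlarged set is still independent; since it will then consist of $\text{dim}(S_i)$ independent vectors all lying in $S_i = S_{i-1} + N_{\pi(i)}$, it will automatically be a basis of $S_i$, closing the induction.

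The only nontrivial point, and the one I expect to be the crux, is that the extension can be realized using vectors from the specific subspace $N_{\pi(i)}$ rather than arbitrary ambient vectors. This is the standard fact that a basis of a subspace $U$ extends to a basis of $U + W$ by adjoining vectors taken from $W$. I would establish it via the quotient $N_{\pi(i)} / \left( S_{i-1} \cap N_{\pi(i)} \right)$, whose dimension is $\text{dim}(N_{\pi(i)}) - \text{dim}(S_{i-1} \cap N_{\pi(i)}) = \text{dim}(S_i) - \text{dim}(S_{i-1}) = R_{\pi(i)}$ by the dimension formula for the sum of two subspaces. Choosing $w_1, \ldots, w_{R_{\pi(i)}} \in N_{\pi(i)}$ whose cosets form a basis of this quotient, any dependence $\sum_k \beta_k w_k = -u$ with $u \in S_{i-1}$ forces $\sum_k \beta_k w_k \in S_{i-1} \cap N_{\pi(i)}$, hence all $\beta_k = 0$ because the cosets are independent, and then $u = 0$; so the newly selected vectors are independent of the old basis. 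The count of available vectors matches $R_{\pi(i)}$ exactly, which is what makes~\eqref{eq:AchCornerPt} the correct prescription. Since this argument applies at every stage and for every permutation $\pi$, iterating it produces all $\sum_{i=1}^{m} R_i$ linearly independent decoding vectors, establishing the lemma.
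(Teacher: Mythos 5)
Your proof is correct and follows essentially the same route as the paper: both proceed iteratively along the permutation, at stage $i$ adjoining exactly $R_{\pi(i)} = \text{dim}\left(\sum_{j=1}^{i} N_{\pi(j)}\right) - \text{dim}\left(\sum_{j=1}^{i-1} N_{\pi(j)}\right)$ new vectors drawn from $N_{\pi(i)}$ that remain independent of those already chosen, so that the selected vectors always form a basis of the partial sum. The only difference is one of rigor, not of substance: where the paper informally invokes the fact that a basis of $S_{i-1}+N_{\pi(i)}$ can be obtained by greedily extending the current basis with vectors from a basis of $N_{\pi(i)}$, you justify that same extension step cleanly via the quotient $N_{\pi(i)}/\left(S_{i-1}\cap N_{\pi(i)}\right)$ and the dimension formula.
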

\begin{proof}
	We use an iterative algorithm that, for any permutation $\pi = \{\pi(1),\ldots,\pi(m)\}$ of $[m]$, allows to select $R_{\pi(i)}$ vectors from $N_{\pi(i)}$ (with $R_{\pi(i)}$ being defined in~\eqref{eq:AchCornerPt}) so that all the selected $\sum_{i=1}^m R_i$ vectors are linearly independent.
	We next illustrate the main steps of the proposed algorithm.
	\begin{enumerate}
		\item We select $R_{\pi(1)} = \text{dim}(N_{\pi(1)})$ independent vectors from $N_{\pi(1)}$. Note that one possible choice for this consists of selecting the basis of the subspace $N_{\pi(1)}$. 
		\item Next we would like to select independent vectors from $N_{\pi(2)}$ that are also independent of the $R_{\pi(1)}$ vectors that we selected in the previous step. Towards this end, we note that a basis of the subspace $N_{\pi(1)} + N_{\pi(2)}$ is a subset of the union between a basis of $N_{\pi(1)}$ and a basis of $N_{\pi(2)}$. Therefore, we can keep selecting vectors from a basis of $N_{\pi(2)}$ as long as we select an independent vector. Since there are $\text{dim}(N_{\pi(1)}+N_{\pi(2)})$ independent vectors in a basis of $N_{\pi(1)}+N_{\pi(2)}$, then we can select 
		\begin{align*}
		R_{\pi(2)}=\text{dim}(N_{\pi(1)}+N_{\pi(2)}) - \text{dim}(N_{\pi(1)})
		\end{align*}
		independent vectors from $N_{\pi(2)}$ that are also independent of the $R_{\pi(1)}$ vectors that we selected in the previous step.
		\item Similar to the above step, we now would like to select independent vectors from $N_{\pi(3)}$ that are also independent of the $R_{\pi(1)}+R_{\pi(2)}$ vectors that we selected in the previous two steps. Towards this end, we note that a basis of the subspace $N_{\pi(1)} + N_{\pi(2)}+ N_{\pi(3)}$ is a subset of the union between a basis of $N_{\pi(1)}+N_{\pi(2)}$ and a basis of $N_{\pi(3)}$. Therefore, we can keep selecting vectors from a basis of $N_{\pi(3)}$ as long as we select an independent vector. Since there are $\text{dim}(N_{\pi(1)}+N_{\pi(2)}+N_{\pi(3)})$ independent vectors in a basis of $N_{\pi(1)}+N_{\pi(2)}+N_{\pi(3)}$, then we can select 
		$$R_{\pi(3)}\!=\!\text{dim}(N_{\pi(1)}+N_{\pi(2)}+N_{\pi(3)}) - \text{dim}(N_{\pi(1)}+N_{\pi(2)})$$ independent vectors from $N_{\pi(3)}$ that are also independent of the $R_{\pi(1)}+R_{\pi(2)}$ vectors that we selected in the previous two steps.
		\item We keep using the iterative procedure above for all the elements in $\pi$, and we end up with $\sum_{i=1}^m R_i$ vectors that are linearly independent.
	\end{enumerate}
	This concludes the proof of Lemma~\ref{lem:LinInd}.
\end{proof}

\begin{rem}
	Note that, since there are $m!$ possible permutations of $[m]$, then Lemma~\ref{lem:LinInd} offers $m!$ possible choices for selecting $R_i, i \in [m]$ vectors from $N_i$ so that all the $\sum_{i=1}^m R_i$ selected vectors are linearly independent.
\end{rem}

\begin{rem}
	The result in Lemma~\ref{lem:LinInd} implies that rate $m$-tuple $(R_1,R_2,\ldots,R_m)$, with $R_i, i \in [m]$ being defined in~\eqref{eq:AchCornerPt}, can be securely achieved by our proposed scheme.
\end{rem}
We now leverage the result in Lemma~\ref{lem:LinInd} to prove Lemma~\ref{lem:AchScheme}. We start by noting}
that the rate region in~\eqref{eq:AchScheme} can be expressed as {the following polyhedron}:
\begin{align}
P_f : = \left \{ {R} \in \mathbb{R}^{[m]} :{R} \geq \mathbf{0}, \sum_{i \in \mathcal{A}} {R_i} \leq f(\mathcal{A}), \ \forall \ \mathcal{A} \subseteq {[m]} \right \}, \label{eq:polyMatriod}
\end{align}
where $f(\mathcal{A}) := \text{dim} \left( \sum_{i \in \mathcal{A}} N_i \right)$. We now prove the following lemma, which states that this function $f(\cdot)$ is a {\it non-decreasing} and {\it submodular} function over subsets of $[m]$.
\begin{lem}
	The set function 
	\begin{align*}
	f(\mathcal{A}) := \text{dim} \left( \sum_{i \in \mathcal{A}} N_i \right), \ \forall \mathcal{A} \subseteq [m]
	\end{align*}
	is a non-decreasing and submodular function.
\end{lem}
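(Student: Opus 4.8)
The plan is to verify the two properties directly from elementary facts about dimensions of subspaces, with essentially no heavy computation. Throughout, I would write $U_{\mathcal{A}} := \sum_{i \in \mathcal{A}} N_i$, so that $f(\mathcal{A}) = \text{dim}(U_{\mathcal{A}})$, and reduce everything to the two structural facts that $U_{\mathcal{A} \cup \mathcal{B}} = U_{\mathcal{A}} + U_{\mathcal{B}}$ and $U_{\mathcal{A} \cap \mathcal{B}} \subseteq U_{\mathcal{A}} \cap U_{\mathcal{B}}$.

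For the non-decreasing property, I would first observe that if $\mathcal{A} \subseteq \mathcal{B}$ then $U_{\mathcal{A}} \subseteq U_{\mathcal{B}}$, since every summand $N_i$ appearing in $U_{\mathcal{A}}$ also appears in $U_{\mathcal{B}}$. Taking dimensions and using monotonicity of dimension under subspace containment gives $f(\mathcal{A}) \leq f(\mathcal{B})$ immediately.

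For submodularity, the goal is to show $f(\mathcal{A}) + f(\mathcal{B}) \geq f(\mathcal{A} \cup \mathcal{B}) + f(\mathcal{A} \cap \mathcal{B})$ for all $\mathcal{A}, \mathcal{B} \subseteq [m]$. The two observations I would invoke are: (i) $U_{\mathcal{A} \cup \mathcal{B}} = U_{\mathcal{A}} + U_{\mathcal{B}}$, because summing $N_i$ over the union of index sets is the same as adding the two partial sums; and (ii) $U_{\mathcal{A} \cap \mathcal{B}} \subseteq U_{\mathcal{A}} \cap U_{\mathcal{B}}$, because any $N_i$ with $i \in \mathcal{A} \cap \mathcal{B}$ is a summand of both $U_{\mathcal{A}}$ and $U_{\mathcal{B}}$. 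Combining these with the standard dimension identity $\text{dim}(U + V) + \text{dim}(U \cap V) = \text{dim}(U) + \text{dim}(V)$, applied to $U = U_{\mathcal{A}}$ and $V = U_{\mathcal{B}}$, yields
\begin{align*}
f(\mathcal{A}) + f(\mathcal{B}) &= \text{dim}(U_{\mathcal{A}} + U_{\mathcal{B}}) + \text{dim}(U_{\mathcal{A}} \cap U_{\mathcal{B}}) \\
&= f(\mathcal{A} \cup \mathcal{B}) + \text{dim}(U_{\mathcal{A}} \cap U_{\mathcal{B}}) \\
&\geq f(\mathcal{A} \cup \mathcal{B}) + f(\mathcal{A} \cap \mathcal{B}),
\end{align*}
where the second line uses observation (i) and the final inequality uses observation (ii) together with monotonicity of dimension.

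The only point requiring care is observation (ii): it is a \emph{containment} and not in general an equality, since $U_{\mathcal{A}} \cap U_{\mathcal{B}}$ may also contain vectors obtained by combining summands indexed in $\mathcal{A} \setminus \mathcal{B}$ with summands indexed in $\mathcal{B} \setminus \mathcal{A}$ that happen to coincide. Fortunately only the containment direction is needed, and it produces the inequality in the correct sense. I therefore expect no genuine obstacle here: the statement is just the familiar fact that the rank function of a sum of subspaces is a polymatroid rank function, and the proof reduces to the short computation above.
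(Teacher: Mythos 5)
Your proof is correct and takes essentially the same route as the paper's own: both derive monotonicity from subspace containment, and both obtain submodularity from the dimension identity $\text{dim}(U+V)+\text{dim}(U\cap V)=\text{dim}(U)+\text{dim}(V)$ applied to $U=\sum_{i\in\mathcal{A}}N_i$, $V=\sum_{i\in\mathcal{B}}N_i$, combined with the containment $\sum_{i\in\mathcal{A}\cap\mathcal{B}}N_i \subseteq \left(\sum_{i\in\mathcal{A}}N_i\right)\cap\left(\sum_{i\in\mathcal{B}}N_i\right)$. The caveat you flag, that only the containment (not equality) of the intersection is needed, is exactly the single inequality step in the paper's proof, so there is no gap.
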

\begin{proof}
	Let $\mathcal{A} \subset \mathcal{B} \subseteq [m]$, then
	\begin{align*}
	f(\mathcal{B}) & = \text{dim} \left( \sum_{i \in \mathcal{B}} N_i \right)  = \text{dim} \left( \sum_{i \in \mathcal{A}} N_i + \sum_{j \in \mathcal{B} \setminus \mathcal{A}} N_j  \right) \\
	& \geq   \text{dim} \left( \sum_{i \in \mathcal{A}} N_i \right)  = f(\mathcal{A}),
	\end{align*} 
	which proves that the function $f(\cdot)$ is non-decreasing.
	For proving submodularity, consider two subsets $\mathcal{C}, \mathcal{D} \subseteq [m]$. Then, we have
	\begin{align*}
	f(\mathcal{C} \cup \mathcal{D}) & = \text{dim} \left( \sum_{i \in \mathcal{C} \cup \mathcal{D}} N_i \right) =  \text{dim} \left( \sum_{i \in \mathcal{C}} N_i  + \sum_{j \in \mathcal{D}} N_j\right) \\
	& = \text{dim} \left( \sum_{i \in \mathcal{C}} N_i \right) +  \text{dim} \left( \sum_{j \in \mathcal{D}} N_j \right) \\
	& \quad - \text{dim} \left( \left(\sum_{i \in \mathcal{C}} N_i \right) \cap \left(\sum_{j \in \mathcal{D}} N_j \right)  \right) \\
	& \leq \text{dim} \left( \sum_{i \in \mathcal{C}} N_i \right) +  \text{dim} \left( \sum_{j \in \mathcal{D}} N_j \right) \\
	& \quad - \text{dim} \left( \sum_{k \in \mathcal{C} \cap \mathcal{D}} N_k \right) \\
	& = f(\mathcal{C}) + f(\mathcal{D}) - f(\mathcal{C} \cap \mathcal{D}),
	\end{align*}
	which proves that the function $f(\cdot)$ is submodular.
\end{proof}
Since $f(\cdot)$ is a submodular function, then the polyhedron defined in~\eqref{eq:polyMatriod} is the polymatroid associated with $f(\cdot)$. Moreover, since $f(\cdot)$ is also non-decreasing, then the corner points of the polymatroid in~\eqref{eq:polyMatriod} can be found as follows~\cite[Corollary 44.3a]{schrijver2003combinatorial}. Consider a permutation $\pi = \{\pi(1), \ldots, \pi(m)\}$ of $[m]$. Then, by letting ${\mathcal{S}}_\ell=\{\pi(1), \ldots, \pi(\ell)\}$ for $1 \leq \ell \leq m$, we get that the corner points of the polymatroid in~\eqref{eq:polyMatriod} can be written as
\begin{align*}
R_{\pi(\ell)} = f({\mathcal{S}}_{\ell}) - f({\mathcal{S}}_{\ell-1}).
\end{align*}
Note that by using $f(\mathcal{A}) = \text{dim} \left( \sum_{i \in \mathcal{A}} N_i \right)$, the above corner points are precisely those in~\eqref{eq:AchCornerPt} in Lemma~\ref{lem:LinInd}.
Since each rate $m$-tuple $(R_1,R_2,\ldots,R_m)$, with $R_i, i \in [m]$ being defined in~\eqref{eq:AchCornerPt}, can be securely achieved by our proposed scheme, it follows that the secure rate region in~\eqref{eq:AchScheme} can also be achieved by our scheme. This concludes the proof of Lemma~\ref{lem:AchScheme}.
%
%


\section{Analysis of the dimension of $(V_1 \cap V_2 \cap V_3)$}
\label{app:FurtherAnalCase3}
{From our analysis, we have obtained
		\begin{align}
		\text{dim}(V_1 \cap V_2 \cap V_3) & \leq k + {[k- M_{\cap \{i,j\}}]^+}  \nonumber \\
		& \quad + {[k - M_{\cap \{\ell, \{i,j\} \}} ]^+} \nonumber
\\& \quad +  t- {M_{\{1,2,3\}}}.
\label{eq:dimV1V2V3analysis}
		\end{align}
We now further consider two cases.}
%

\noindent
{\bf Case 3A:} There {exist $(i, j) \in [3]^2, i \neq j,$} such that  $M_{\cap \{i,j \}} \geq k$. In this {case}, with the permutation $(i, j, \ell)$, {the expression in~\eqref{eq:dimV1V2V3analysis} becomes} 
			\begin{align*}
			&\text{dim}(V_1 \cap V_2 \cap V_3) \leq K \!+\! {[K - M_{\cap \{\ell, \{i,j\} \}} ]^+} \!+\!  t \!-\! {M_{\{1,2,3\}}} \\
			& \quad = t-{M_{\{1,2,3\}}}+ \max \{2k -M_{\cap \{\ell, \{i,j\} \}}, k\}.
			\end{align*}
			{From~\eqref{eq:dimPlusdimInt},} this implies that 
			\begin{align*}
			& {\text{dim}(N_1+N_2+N_3)}
\\& \geq {M_{\{1,2,3\}}} - \max \{2k -M_{\cap \{\ell, \{i,j\} \}}, k \} \\
			& = \min \left \{ {M_{\{1,2,3\}} } -k, {M_{\{\ell\}}} + M_{\{i,j\}} - 2k\right \},  
			\end{align*}
			where {the last equality} follows {since} $M_{\{1,2,3\}} = M_{\{i,j\}} + {M_{\{\ell\}}} -M_{\cap \{\ell, \{i,j\} \}}$. With this, the condition in~\eqref{eq:capacity_bound} is satisfied.

\noindent
{\bf Case 3B:} {We have} $M_{\cap \{i,j \}} < K, \forall {(i,j) \in [3]^2, i \neq j}$. 
In this {case, we compute $\text{dim}(V_1 \cap V_2 \cap V_3)$} as follows: 
{we first fill the positions of $x$} indexed by $\mathcal{M}_1$ with $k$ degrees of freedom, and then fill {the positions of $x$} indexed by $\mathcal{M}_2$ {with} $(k - M_{\cap \{1,2\}})$ degrees of freedom as before. Now, we may have fixed more than $k$ positions {of $x$} corresponding to {indexes in} $\mathcal{M}_3$, which is not feasible. 
If that is the case, we backtrack {(i.e.,} remove excess degrees of freedom) that we have used {for} filling {positions of $x$} indexed by $\mathcal{M}_2$. Thus, 
\begin{enumerate}
\item If $M_{\cap \{3, \{1,2\} \}} \leq k$, then
\begin{align*}
			& {\text{dim}(V_1 \cap V_2 \cap V_3)} 
\\ \leq &  t - M_{\{1,2,3\}} + k + (k -  M_{\cap \{1,2\} \}}) 
\\& +  (k - M_{\cap \{3, \{1,2\} \}} ).
			\end{align*}
{This, from~\eqref{eq:dimPlusdimInt}, implies
\begin{align*}
\text{dim}(N_1+N_2+N_3) \geq M_{\{1\}} + M_{\{2\}} + M_{\{3\}} - 3k,
\end{align*}
which satisfies the condition in~\eqref{eq:capacity_bound}.}
\item If $M_{\cap \{3, \{1,2\} \}} > k$, then
			\begin{align*}
			& {\text{dim}(V_1 \cap V_2 \cap V_3)} 
\\  \leq &  t - M_{\{1,2,3\}} + k + (k -  M_{\cap \{1,2 \}}) - \\ & \min \{ (k -  M_{\cap \{1,2 \}}), (M_{\cap \{3, \{1,2\} \}} - k ) \}.
			\end{align*}
{This, from~\eqref{eq:dimPlusdimInt}, implies
\begin{align*}
&\text{dim}(N_1+N_2+N_3) 
\\ \geq & \min \left \{M_{\{1,2,3\}} - k, M_{\{1\}} + M_{\{2\}} + M_{\{3\}} - 3k \right \},
\end{align*}
which satisfies the condition in~\eqref{eq:capacity_bound}.}
\end{enumerate}
			
%

{
	\section{Proof of Security: Separable Networks}
	\label{app:two_layer_seperable_sec}

	{In this section we will show that the scheme described for any separable network is also secure.}
{Towards this end, we start by noting that the only property of $V$ that we have used in our scheme for two-layer networks is the Maximum Distance Separable (MDS) property (i.e., any $k$ rows of $V$ are linearly independent).
This implies that, even if in~\eqref{eq:SignRxRel} we select a random matrix $\tilde{V}$ instead of $V$, with high probability (close to $1$ for large field size) we will have a secure scheme for the child two-layer network. }

	Let $Y$ be the collection of the symbols transmitted {(multicast) on the parent separable network, as described above.} 
	Since {multicasting} involves network coding, we {have}
	\begin{align}
	Y = \left[ \begin{array}{c} G \end{array} \right] X,
	\end{align}
	where $G$ is {an encoding matrix of dimension $|\mathcal{E}| \times M_{[m]}$ 
		Thus,}
	\begin{align*}
	Y & = \left[ \begin{array}{c} G \end{array} \right]\left[ \begin{array}{cc}M & \tilde{V} \end{array} \right] \left[ \begin{array}{c} W \\ K\end{array}\right]  \\
	& = \left[ \begin{array}{cc}GM & G\tilde{V} \end{array} \right] \left[ \begin{array}{c} W \\ K\end{array}\right].
	\end{align*}
	{From the security condition in~\cite[Lemma 3.1]{CaiSecCond}, it follows that the scheme above is secure}
	if we can show that for any choice of $G$, there exists a $\tilde{V}$ such that $\tilde{V}$ is an {MDS matrix (i.e., any $k$ rows of $\tilde{V}$ are linearly independent)
		and}
	\begin{align*}
	rk\left(\left. \left[ \begin{array}{cc}GM & G\tilde{V} \end{array} \right]\right|_\mathcal{Z}\right) = rk\left(\left. \left[ \begin{array}{cc}G\tilde{V} \end{array} \right]\right|_\mathcal{Z}\right), \ \forall |\mathcal{Z}| \leq k.
	\end{align*} 

	{We} will show that for any choice of $G$ {of size $|\mathcal{E}| \times M_{[m]}$ with $M_{[m]} \geq k$,} there exists a $\tilde{V}$ such that $\tilde{V}$ is an MDS matrix {(i.e., any $k$ rows of $\tilde{V}$ are linearly independent)} and
	\begin{align}
	\label{eq:cond_sec_separable}
	rk\left(\left. \left[ \begin{array}{cc}GM & G\tilde{V} \end{array} \right]\right|_\mathcal{Z}\right) = rk\left(\left. \left[ \begin{array}{cc}G\tilde{V} \end{array} \right]\right|_\mathcal{Z}\right), \ \forall |\mathcal{Z}| \leq k.
	\end{align}
	{We start by noting that
		\begin{align*}
		rk\left(\left. \left[ \begin{array}{cc}G\tilde{V} \end{array} \right]\right|_\mathcal{Z}\right)
		&= rk\left( \left.  G \right|_\mathcal{Z} \cdot \tilde{V}  \right )
		\\& \leq rk\left(\left. \left[ \begin{array}{cc}GM & G\tilde{V} \end{array} \right]\right|_\mathcal{Z}\right) \\
		& = rk\left(\left.  G \right|_\mathcal{Z}  \cdot \left[ \begin{array}{cc}M & \tilde{V} \end{array} \right]\right) 
		\leq rk\left( \left.  G \right|_\mathcal{Z} \right ).
		\end{align*}
		Thus, if we prove that, for all $|\mathcal{Z}| \leq k$,
		\begin{align}
		\label{eq:CondGoodRank}
		rk\left( \left.  G \right|_\mathcal{Z} \cdot \tilde{V}  \right ) = rk\left( \left.  G \right|_\mathcal{Z} \right ),
		\end{align}
		then we also show that~\eqref{eq:cond_sec_separable} holds.
		In what follows, we formally prove that a $\tilde{V}$ such that $\tilde{V}$ is an MDS matrix that satisfies the condition in~\eqref{eq:CondGoodRank} for all $|\mathcal{Z}| \leq k$ can be constructed with a non-zero probability.
		Towards this end, we let $\hat{k} = rk\left( \left.  G \right|_\mathcal{Z} \right )$, where $\hat{k} \leq k$ since $|\mathcal{Z}| \leq k$.  { We define the event 
		\begin{align*}
		A = \left \{rk\left(\left. \left[ \begin{array}{cc} G \end{array} \right]\right|_\mathcal{Z} \tilde{V} \right) = rk\left(\left. \left[ \begin{array}{cc} G \end{array} \right]\right|_\mathcal{Z} \right), \  \forall |\mathcal{Z}| \leq k \right \}.
		\end{align*}
		We have}
	\begin{align*}
	& \Pr \left\{ A \cap \left\{  \tilde{V} \text{ is MDS} \right\}\right\}  
	\\ & {\stackrel{{\rm{(a)}}}{=} 1 - \Pr\left\{A^c \cup \left\{  \tilde{V} \text{ is not MDS} \right\}\right\} }
	\\ & {\stackrel{{\rm{(b)}}}{\geq} 1 - \underbrace{\Pr \left\{  A^c\right\}}_{P_1}   - \underbrace{\Pr\left\{  \tilde{V} \text{ is not MDS} \right\}}_{P_2},}
	\end{align*}
	{where: (i) the equality in $\rm{(a)}$ follows by using the De Morgan's laws, and (ii) the inequality in $\rm{(b)}$ follows since for two events $A$ and $B$, we have $\Pr(A \cup B)\leq \Pr(A)+\Pr(B)$. We now further upper bound the two probability terms $P_1$ and $P_2$. For $P_1$, we obtain
		\begin{align*}
		P_1 &= \Pr \left\{ rk\left(\left. \left[ \begin{array}{cc} G \end{array} \right]\right|_\mathcal{Z} \tilde{V} \right) = rk\left(\left. \left[ \begin{array}{cc} G \end{array} \right]\right|_\mathcal{Z} \right), \ \forall |\mathcal{Z}| \leq k\right\}^c
		\\ & \stackrel{{\rm{(c)}}}{=} \Pr \left \{ \left( \bigcap_{\mathcal{Z}: |\mathcal{Z}|\leq k} A_{\mathcal{Z}} \right )^c\right \} \\
		& \stackrel{{\rm{(d)}}}{=} \Pr \left \{ \bigcup_{\mathcal{Z}: |\mathcal{Z}|\leq k} (A_{\mathcal{Z}})^c \right \}
		\\& \stackrel{{\rm{(e)}}}{\leq} \sum_{\mathcal{Z}: |\mathcal{Z}|\leq k} \Pr \left \{ (A_{\mathcal{Z}})^c \right \}
		\\& \stackrel{{\rm{(f)}}}{\leq} \left ( \frac{{\rm{e}} |\mathcal{E}|}{k}\right )^k \max_{\mathcal{Z}: |\mathcal{Z}|\leq k} \Pr \left \{ (A_{\mathcal{Z}})^c \right \}
		\\& = \left ( \frac{{\rm{e}} |\mathcal{E}|}{k}\right )^k \max_{\mathcal{Z}: |\mathcal{Z}|\leq k} \left( 1-\Pr \left \{ A_{\mathcal{Z}} \right \} \right )\\ 
		&\stackrel{{\rm{(g)}}}{\leq} \left ( \frac{{\rm{e}} |\mathcal{E}|}{k}\right )^k \max_{\mathcal{Z}: |\mathcal{Z}|\leq k} \left( 1-\Pr \left \{ \hat{A}_{\mathcal{Z}} \right \} \right )\\
		& \stackrel{{\rm{(h)}}}{=} \left ( \frac{{\rm{e}} |\mathcal{E}|}{k}\right )^k  \left( 1-  \prod_{i=0}^{\hat{k}-1} \left(1-\frac{q^i}{q^{\hat{k}}} \right) \right ) \\
		& \stackrel{{\rm{(i)}}}{\leq} \left ( \frac{{\rm{e}} |\mathcal{E}|}{k}\right )^k \left( 1-  \left(1-\frac{1}{q} \right)^k \right ),
		\end{align*}
		where: (i) the equality in $\rm{(c)}$ follows by defining, for a given $\mathcal{Z}$ such that $|\mathcal{Z}|\leq k$, the event 
		\begin{align*}
		A_{\mathcal{Z}} = \left \{rk\left(\left. \left[ \begin{array}{cc} G \end{array} \right]\right|_\mathcal{Z} \tilde{V} \right) = rk\left(\left. \left[ \begin{array}{cc} G \end{array} \right]\right|_\mathcal{Z} \right) \right \},
		\end{align*}
		(ii) the equality in $\rm{(d)}$ follows by using the De Morgan's laws;
		(iii) the inequality in $\rm{(e)}$ follows by using the union bound;
		(iv) the inequality in $\rm{(f)}$ follows since ${{n} \choose {t}} \leq \left ( \frac{{\rm{e}}n}{t} \right)^t$; 
		(v) the inequality in $\rm{(g)}$ follows by defining the event $\hat{A}_{\mathcal{Z}}$ as
		\begin{align*}
		\hat{A}_{\mathcal{Z}} = \left \{rk\left( \hat{G} \hat{V} \right) = rk\left(\left. \left[ \begin{array}{cc} G \end{array} \right]\right|_\mathcal{Z} \right) \right \},
		\end{align*}
		where { $\hat{G}$ is the matrix formed by the $\hat{k} = rk\left( \left.\left[ \begin{array}{c} G \end{array} \right]\right|_\mathcal{Z} \right)$ independent rows of $\left.\left[ \begin{array}{c} G \end{array} \right]\right|_\mathcal{Z}$,} 
		and $\hat{V}$ is formed by the first $\hat{k}$ columns of $\tilde{V}$. Thus, the inequality in $\rm{(g)}$ then follows since $\hat{A}_{\mathcal{Z}} \subseteq A_{\mathcal{Z}}$; 
		(vi) the equality in $\rm{(h)}$ follows due to the following computation. 
		We write 
		\begin{align*}
		 \hat{V} & = \left[ \begin{array}{cccc} v_1 & v_2 & \ldots & v_{\hat{k}} \end{array} \right]  \\   \implies		
		\hat{G} \hat{V} & = \left[ \begin{array}{cccc} \hat{G}v_1 & \hat{G}v_2 & \ldots & \hat{G}v_{\hat{k}} \end{array} \right].
		\end{align*}
		{Note that the matrix $\hat{G} \hat{V}$ is of full rank (equal to $\hat{k}$) if}
		the only solution to $ \sum_{i=1}^{\hat{k}} c_i \hat{G} {v}_i = 0 $ is $c_i = 0, \forall i \in [\hat{k}]$. 
		Let $\hat{N}$ be the null space of $\hat{G}$, and $\hat{N}^\perp$ be the space such that $\hat{N}^\perp \cap \hat{N} = {\emptyset}$ and $\hat{N}^\perp \cup \hat{N} = \mathbb{F}_q^{M_{[m]}}$. 
		Then, we can write each $v_i, i \in [\hat{k}],$ as the sum of its projection on $\hat{N}$ (say $v_i^{(a)}$) and the residual in $\hat{N}^\perp$ (say $v_i^{(b)}$). 
		This implies that $\hat{G} \hat{V}$ is of full rank if the only solution to $\sum_{i=1}^{\hat{k}} c_i \hat{G} {v}_i^{(b)} = 0 $ is $c_i = 0, \forall i \in [\hat{k}]$ (because $\hat{G} v_i^{(a)} = 0$). 
		Since a random choice of $v_i$ results in a random choice on $v_i^{(b)}$, then the probability of $\hat{G} \hat{V}$ being of full rank is equal to the probability that all the vectors $v_i^{(b)}, \ i \in [\hat{k}]$ are mutually independent in $\hat{N}^\perp$. 
		This probability, since {$\text{dim}(\hat{N}^\perp) = \hat{k}$,} is equal to  $\prod_{i=0}^{\hat{k}-1} \left(1-\frac{q^i}{q^{\hat{k}}} \right)$; finally, (vii) the inequality in $\rm{(i)}$ follows since $i -\hat{k} \leq -1$ for all $i \in [0:\hat{k}-1]$ and $\hat{k} \leq k$.

		For $P_2$, we obtain
		\begin{align*}
		P_2 &= \Pr\left\{  \tilde{V} \text{ is not MDS} \right\} \\
		& \stackrel{{\rm{(j)}}}{=} \Pr \left \{ \left ( \bigcap_{\mathcal{S}: |\mathcal{S}|=k} A_s\right )^c \right \} \\
		& \stackrel{{\rm{(k)}}}{=} \Pr \left \{ \bigcup_{\mathcal{S}: |\mathcal{S}|= k} (A_{\mathcal{S}})^c \right \}
		\\& \stackrel{{\rm{(\ell)}}}{=} {{M_{[m]}} \choose {k}} \Pr \left \{ (A_{\mathcal{S}})^c \right \} \\ 
		& = {{M_{[m]}} \choose {k}} \left (1-\Pr \left \{ A_{\mathcal{S}} \right \} \right) \\
		& \stackrel{{\rm{(m)}}}{=} {{M_{[m]}} \choose {k}} \left (1- \prod_{i=0}^{k-1} \frac{q^k-q^i}{q^k}\right)
		\\ & \stackrel{{\rm{(n)}}}{\leq} {{M_{[m]}} \choose {k}} \left (1- \prod_{i=0}^{k-1} \left( 1-\frac{1}{q}\right)\right) \\
		& = {{M_{[m]}} \choose {k}} \left (1- \left( 1-\frac{1}{q}\right)^k\right),
		\end{align*}
		where: (i) the equality in $\rm{(j)}$ follows by defining, for a given $\mathcal{S}$ such that $|\mathcal{S}|=k$, the event
		\begin{align*}
		A_{\mathcal{S}} = \left\{ \tilde{V}|_{\mathcal{S}} \text{ is full rank} \right \},
		\end{align*}
		(ii) the equality in $\rm{(k)}$ follows by using the De Morgan's laws;
		(iii) the equality in $\rm{(\ell)}$ follows by selecting uniformly at random all the subsets of $k$ rows out of the $M_{[m]}$ rows,
		(iv) the equality in $\rm{(m)}$ follows by counting arguments to ensure that the $k$ selected rows are all independent,
		and (v) the inequality in $\rm{(n)}$ follows since $i -k \leq -1$ for all $i \in [0:k-1]$.
		
		Thus, we obtain
		\begin{align*}
		& \Pr \left\{ A \cap \left\{  \tilde{V} \text{ is MDS} \right\}\right\}   
		\\& \geq 1 - \left ( \frac{{\rm{e}} |\mathcal{E}|}{k}\right )^k \left( 1-  \left(1-\frac{1}{q} \right)^k \right ) \\
		&\quad  - {{M_{[m]}} \choose {k}} \left (1- \left( 1-\frac{1}{q}\right)^k\right)
		\\& >0,
		\end{align*}
		where the last inequality holds for sufficiently large $\mathbb{F}_q$.}
	%
	%
}

\end{document}